\definecolor{Myblue}{rgb}{0,0,0.6}
\newcommand{\D}{\text{d}}
\newcommand{\E}{\text{e}}
\newcommand{\I}{\text{i}}
\newcommand{\C}{\mathds{C}}
\newcommand{\N}{\mathds{N}}
\newcommand{\R}{\mathds{R}}
\newcommand{\Z}{\mathds{Z}}
\newcommand{\be}{\begin{equation}}
\newcommand{\ee}{\end{equation}}
\newcommand{\bes}{\begin{equation*}}
\newcommand{\ees}{\end{equation*}}
\newcommand{\del}{\partial}
\newcommand{\Coder}{\text{Coder}}
\newcommand{\MC}{\operatorname{\mathcal{MC}}}
\newcommand{\HH}{\operatorname{HH}}
\newcommand{\Jac}{\operatorname{Jac}}
\newcommand{\Res}{\operatorname{Res}}
\newcommand{\MFW}{\operatorname{MF}(W)}
\newcommand{\DGW}{\operatorname{DG}(W)}
\newcommand{\str}{\operatorname{str}}
\newcommand{\id}{\text{id}}
\newcommand{\Hom}{\operatorname{Hom}}
\newcommand{\End}{\operatorname{End}}
\newcommand{\K}[3]{\left#1#3\right#2}
\theoremstyle{definition}
\newtheorem{definition}{Definition}
\newtheorem{proposition}[definition]{Proposition}
\newtheorem{theorem}[definition]{Theorem}
\newtheorem{lemma}[definition]{Lemma}
\newtheorem{remark}[definition]{Remark}
\numberwithin{equation}{section}
\numberwithin{definition}{section}
\numberwithin{figure}{section}
\begin{document}

\title{Bulk deformations of open topological string theory}
\author{Nils Carqueville \quad Michael M.~Kay
\\[0.5cm]
\normalsize{\tt \href{mailto:nils.carqueville@physik.uni-muenchen.de}{nils.carqueville@physik.uni-muenchen.de}} \quad \\
 \normalsize{\tt \href{mailto:michael.kay@physik.uni-muenchen.de}{michael.kay@physik.uni-muenchen.de}}\\[0.1cm]
  {\normalsize\slshape Arnold Sommerfeld Center for Theoretical Physics, }\\[-0.1cm]
  {\normalsize\slshape LMU M\"unchen, Theresienstra\ss e~37, D-80333 M\"unchen}\\[-0.1cm]
  {\normalsize\slshape Excellence Cluster Universe, Boltzmannstra\ss e~2, D-85748 Garching}}

\date{}
\maketitle

\vspace{-11.8cm}
\hfill {\scriptsize LMU-ASC 15/11}

\vspace{12cm}

\begin{abstract}
We present a general method to construct bulk-deformed open topological string theories 
from Landau-Ginzburg models. To this end we obtain a weak version of deformation quantisation, and we show how this together with the technique of homological perturbation allows to explicitly compute all bulk-deformed open topological string amplitudes at tree-level before tadpole-cancellation. Our approach is based on a coherent treatment of the problem in terms of the fundamental $A_{\infty}$- and $L_{\infty}$-structures involved. 
\end{abstract}

\newpage

\tableofcontents

\section{Introduction and summary}\label{sec:introduction}

From the worldsheet point of view, perturbative superstring theory is based on $\mathcal N=2$ supersymmetric conformal field theories (CFTs) in two dimensions. Among the central relations of this approach is that string amplitudes are obtained from CFT correlators via integration over the moduli space~$\mathcal M$ of worldsheets, i.\,e.~Riemann surfaces with boundaries and field insertions, see e.\,g.~\cite{GSW,d9703136}. We summarise this relation as the heuristic identity
\be\label{stringCFT}
\big\langle\ldots\big\rangle_{\text{string}} = \int_{\mathcal M} \big\langle\ldots\big\rangle_{\text{CFT}} \, .
\ee

The complete and rigorous treatment of string theory only in terms of CFT is a difficult and not fully solved problem. However, if we restrict to the sector of chiral primary fields~\cite{lvw1989} in $\mathcal N=2$ CFTs, the situation is much better understood. Chiral primaries are known to be equivalently described by the BRST cohomology of the topological twist of the associated $\mathcal N=2$ CFT, see e.\,g.~\cite{w1988,zhoulecturenotes}. This construction is also the most important source of examples of \textsl{open/closed topological field theories (TFTs)}~\cite{a1998,l0010269,ms0609042} which axiomatise properties of the path integral and have a simple and concise algebraic formulation in terms of Frobenius algebras. 

The part of full string theory which is built solely on the structure of a topological conformal field theory (and not a full CFT) is called \textsl{topological string theory}. If we restrict to its tree-level boundary sector then one can rewrite the right-hand side of~\eqref{stringCFT} as
\be\label{bdryamp}
W_{i_{1}\ldots i_{n}} \sim \Big\langle \psi_{i_1}\psi_{i_2}\psi_{i_3}\mathcal P\int\psi_{i_4}^{(1)}\ldots \int\psi_{i_{n}}^{(1)} \Big\rangle_{\text{disk}}
\ee
in terms of chiral primaries $\psi_{i}$ inserted on the boundary and their integrated descendants $\int\psi_i^{(1)}$. A thorough study of BRST symmetry and other Ward identities shows that these topological string amplitudes are subject to certain algebraic constraints. As shown e.\,g.~in~\cite{hll0402}, these relations precisely encode the structure of a \textsl{Calabi-Yau $A_{\infty}$-algebra} on the open string state spaces.\footnote{We shall recall the definitions and their relation to topological string amplitudes $W_{i_{1}\ldots i_{n}}$ in subsection~\ref{subsec:AinfLinf}.}

We may summarise this result together with the alternative approach via the functorial definition of open topological conformal field theory in~\cite{c0412149} as follows: in its complete algebraic formulation, the transition from mere open topological \textsl{field} theory to open topological \textsl{string} theory is the passage from Frobenius algebras to Calabi-Yau $A_{\infty}$-algebras. We thus view the study of such $A_{\infty}$-structures as a first-principle approach to open topological string theory. 

One of the immediate and more concrete consequences of a description of topological string amplitudes $W_{i_{1}\ldots i_{n}}$ via an explicit construction of the $A_{\infty}$-structure is the exact computation of a quantity from full string theory. This is the D-brane superpotential $\mathcal W_{\text{eff}}$ of the associated four-dimensional effective field theory, 
$$
\mathcal W_{\text{eff}}(u) = \sum_{n\geqslant 3} \frac{1}{n} \, W_{i_{1}\ldots i_{n}} \, u_{i_{1}}\ldots u_{i_{n}} \, ,
$$
where $u_{i}$ are the boundary moduli. 

Another important yet non-constructive result of~\cite{hll0402} is that \textsl{bulk deformations} of open topological string theories are described (before tadpole-cancellation) by \textsl{curved} $A_{\infty}$-algebras. On the level of amplitudes, bulk deformations are perturbations of the boundary amplitudes~\eqref{bdryamp} by bulk chiral primary fields $\phi_{j}$ and their integrated descendant $\int\phi^{(2)}_{j}$: 
$$
W_{i_{1}\ldots i_{n}} \longmapsto W_{i_{1}\ldots i_{n}}(t) \sim 
\Big\langle \psi_{i_1}\psi_{i_2}\psi_{i_3}\mathcal P\int\psi_{i_4}^{(1)}\ldots \int\psi_{i_{n}}^{(1)} \, \E^{\sum_{j} t_{j} \int\phi_{j}^{(2)}} \Big\rangle_{\text{disk}} \, .
$$
It is worth stressing that here the general case of perturbations by all chiral primaries $\phi_{j}$ (and not only marginal fields) is considered. 

The appearance of curved $A_{\infty}$-algebras is the main reason for the previous lack of first-principle constructions of bulk-deformed open topological string theories, since the theory of curved $A_{\infty}$-algebras is considerably more complicated and less developed than the non-curved case. Again, if such a construction is successful, it also effortlessly affords the computation of the bulk moduli dependent effective superpotential
$$
\mathcal W_{\text{eff}}(u,t) = \sum_{n\geqslant 1} \frac{1}{n} \, W_{i_{1}\ldots i_{n}}(t) \, u_{i_{1}}\ldots u_{i_{n}} \, .
$$

Alternative approaches to the computation of $\mathcal W_{\text{eff}}(u,t)$ in the case of sigma models with compact targets include those in~\cite{ks0805.1013, ab0909.2245, bbg0704.2666, ahmm0901.2937, mw0709.4028, w0605162, js0808.0761, ahjmms0909.1842, dgjt0203173, gj0608027, ghkk0909.2025,bbs1007.2447,kw0805.0792,gkk1011.6375}. Their successful application usually depends on special properties of the geometry involved, while the $A_{\infty}$-approach is always applicable (though not always most efficient in computations). Conversely, the general approach via $A_{\infty}$-algebras can also be applied to non-compact targets as in~\cite{af0506041,cqv0912.4699}.

\medskip

In the present paper we propose a method based on $A_{\infty}$-theory to construct bulk-deformed open topological string theories from \textsl{Landau-Ginzburg models},%
\footnote{Note that the Landau-Ginzburg models we consider do not necessarily have to correspond to critical string theories.} and we also provide some tools that can be applied in even more general theories. 

Our motivation to study Landau-Ginzburg models is at least twofold. On the one hand, their description as open TFTs via matrix factorisations is comparably simple so that conceptual problems can be addressed unfettered by unnecessary complications. Another motivation has its root in the ultimate goal to understand full string theory rigorously and conceptually from the worldsheet perspective. Among the best-understood classes of CFTs are the so-called rational ones, for which the representation theory of the underlying vertex operator algebra is under comparably good control. However, one may argue that rational CFTs are only ``rather well inspected lampposts''~\cite{d1005.2779}, and there is need for a better grasp on more general theories. A potent tool are Landau-Ginzburg models and their orbifolds, whose infrared fixed points under renormalisation group flow are believed~\cite{kms1989, m1989, vw1989, howewest3} to cover a huge class of $\mathcal N=2$ CFTs. One of the main merits of this conjectured CFT/LG correspondence is the observation that Landau-Ginzburg models discriminate much less against the presence or lack of rational symmetry, and hence they may serve as a guiding light away from the lamppost of rational CFTs. 

\medskip

In the remainder of this introduction we shall give an outline of the main ideas and results of the present paper. The relevant technical background and notions we use will be carefully introduced in section~\ref{sec:mostlyreview}, and section~\ref{sec:bulk-deformedLG} together with the appendix contains the details of our construction. 

We first recall that the on-shell $A_{\infty}$-structure encoding open string amplitudes always originates from an off-shell, string field theoretic description. The latter is encoded in an off-shell state space~$A$ together with the BRST differential and the operator product expansion. They give~$A$ the structure of a differential graded algebra (which is a special case of an $A_{\infty}$-algebra) that we will denote $(A,\del)$. Similarly, the correct on-shell $A_{\infty}$-structure is $(H,\widetilde\del)$ where~$H$ is the BRST cohomology and~$\widetilde\del$ is a short-hand notation for all higher $A_{\infty}$-products. $(H,\widetilde\del)$ is explicitly constructed from $(A,\del)$ as its so-called minimal model (see proposition~\ref{minmodthm} and the discussion in subsection~\ref{oTSTforLG}). Note that the terminology ``on-shell'' (for BRST cohomology~$H$) and ``off-shell'' (for the space~$A$ which also includes non-closed states) that we use throughout the paper is different from the one used in the literature on the computation of $\mathcal W_{\text{eff}}$ by geometrical methods. 
In the latter context ``on-shell'' means that the moduli are chosen such that the F-term equations $\partial \mathcal W_{\text{eff}} = 0$ (which are identical to the Maurer-Cartan equation of the open string $A_\infty$-algebra) are satisfied, and our notion of ``off-shell'' does not play a role. It is, however, the crucial string field theoretic setting from which our approach originates.

In the case of affine Landau-Ginzburg models with potential~$W$, the open string algebras $(A,\del)$ and $(H,\widetilde\del)$ are described in terms of matrix factorisations of~$W$. To study bulk deformations of this open topological string theory, we also need to know the on-shell and off-shell descriptions of the bulk sector. They are given by the Jacobi ring $\Jac(W)$ and the space $T_{\text{poly}}$ of polyvector fields, respectively. It is a general fact that bulk sectors of topological string theories have an $L_{\infty}$-structure, see e.\,g.~\cite{z9206084,z9705241,ks0410291,ks0510118}. For Landau-Ginzburg models this is under good control as we review in subsection~\ref{subsec:LGmodels}. 

To make the connection between the bulk sector and the deformations it induces on the open string algebras $(A,\del)$ and $(H,\widetilde\del)$ more transparent, we must explain the latter's condensed notation in some more detail. (For a more complete discussion we refer to subsection~\ref{subsec:AinfLinf}.) Technically we describe the $A_{\infty}$-structures on~$A$ and~$H$ by nilpotent coderivations~$\del$ and~$\widetilde\del$ on certain spaces ${T_{A}}$ and ${T_{H}}$, respectively. Furthermore, the spaces $\Coder(A,\del)$ and $\Coder(H,\widetilde\del)$ of coderivations on ${T_{A}}$ and ${T_{H}}$ naturally have the structure of $L_{\infty}$-algebras. It turns out that the problem of deforming $(A,\del)$, i.\,e.~finding operators~$\delta$ such that $(A,\del+\delta)$ is still an $A_{\infty}$-algebra, is equivalent to solving the Maurer-Cartan equation for $\Coder(A,\del)$ (and a similar statement is true for~$H$). A general important property is that solution spaces of Maurer-Cartan equations (up to gauge transformations) for quasi-isomorphic $L_{\infty}$-algebras are in bijection, see proposition~\ref{prop:MCsolutions}. 

We are now in a position to present our strategy to construct all bulk deformations of the Landau-Ginzburg open topological string theory described by $(H,\widetilde\del)$. This strategy divides into two main steps. The first step classifies all deformations of the off-shell algebra $(A,\del)$, while the second step transports these deformations to the on-shell algebra $(H,\widetilde\del)$. These are our main results, and they are the content of theorems~\ref{Th1} and~\ref{thm:bulkdeformations}, respectively. 
Our construction is the natural stringy bulk-boundary map that provides
the bulk-deformed open topological string amplitudes before tadpole-cancellation, as in~\cite{hll0402}.

To take the first step we will construct an $L_{\infty}$-quasi-isomorphism
\be\label{TC}
T_{\text{poly}} \longrightarrow \Coder(A,\del)
\ee
which links the off-shell bulk space $T_{\text{poly}}$ with the coderivations governing deformations of the off-shell open string algebra $(A,\del)$. It will be easy to see that the solutions of the Maurer-Cartan equation for $T_{\text{poly}}$ are precisely given by the on-shell bulk space $\Jac(W)$, and since~\eqref{TC} is a quasi-isomorphism we thus know that all deformations of $(A,\del)$ come from (on-shell) bulk fields. Furthermore, as an aside from the main subject of the paper, we will also sketch a construction of an off-shell enhancement of the Kapustin-Li pairing~\cite{kl0305,hl0404} that appears naturally in the context of our first step. 

To construct the map~\eqref{TC} in subsection~\ref{WDQ}, we first translate the problem of deforming the differential graded algebra $(A,\del)$ into that of a curved algebra $(A,\del')$ (with its curvature originating from the potential~$W$) by finding an isomorphism $\Coder(A,\del)\cong \Coder(A,\del')$. Then we observe that the $L_{\infty}$-quasi-isomorphism
$$
T_{\text{poly}} \longrightarrow \Coder(A,\del')
$$
is known in the special commutative case in which also both the curvature in $\del'$ and the potential~$W$ (as part of the $L_{\infty}$-structure of $T_{\text{poly}}$) vanish. To wit, this case is precisely the construction of Kontsevich's \textsl{deformation quantisation}~\cite{k9709040,t9803025,cf9902090}. With this deep result as a given, we show how to generalise it to our case of interest, thus completing our first main step in the proof of theorem~\ref{Th1}. Since curved $A_{\infty}$-algebras are sometimes also called weak, we refer to this construction as ``weak deformation quantisation''. 

In the second step we have to understand how the bulk deformations $(A,\del+\delta)$ of the off-shell algebra are passed down to the on-shell algebra $(H,\widetilde\del)$. To this effect we note that the method of the \textsl{homological perturbation lemma}~\cite{GuLaSt,c0403266} can be employed in our setting. As is explained together with the rest of our second step in subsection~\ref{transbulkonshell}, under certain conditions the homological perturbation lemma allows to transport deformations of one complex to another one. We show that these conditions are met for the complexes $({T_{A}},\del)$ and $({T_{H}},\widetilde\del)$, and we thus obtain an on-shell deformation $\widetilde\del+\widetilde\delta$ from $\del+\delta$. We emphasise that this construction is entirely explicit and works for arbitrary $A_{\infty}$-algebras. 

From the homological perturbation lemma alone one however cannot decide whether or not $(\widetilde\del+\widetilde\delta)\in\End({T_{H}})$ actually is a coderivation and thus really encodes an $A_{\infty}$-structure. Yet by reformulating homological perturbation in coherent $L_{\infty}$-language (proposition~\ref{HPL}) we construct another $L_{\infty}$-morphism
$$
\Coder(A,\del) \longrightarrow \Coder(H,\widetilde\del)
$$
and thus find that our deformations $\widetilde\del+\widetilde\delta$ are indeed coderivations (proposition~\ref{HPLCoder}). In this way we construct all bulk-induced deformations of Landau-Ginzburg open topological string theories from first principles.

\section{Review of the relevant notions and results}\label{sec:mostlyreview}

In this section we first introduce basic notions and results from the theory of $A_{\infty}$- and $L_{\infty}$-algebras. Then we review B-twisted Landau-Ginzburg models by recalling how they are  known to be endowed with the structures of topological field theory and open topological string theory. 

\subsection[$A_{\infty}$- and $L_{\infty}$-algebras]{$\boldsymbol{A_{\infty}}$- and $\boldsymbol{L_{\infty}}$-algebras}
\label{subsec:AinfLinf}

A \textsl{curved $A_{\infty}$-algebra} is a ($\Z$- or $\Z_{2}$-) graded vector space\footnote{We always work over the field $\C$.} $A=\bigoplus_{i} A_{i}$ together with  a codifferential $\del$ of degree $+1$ on the tensor coalgebra
$$
{T_{A}} = \bigoplus_{n\geqslant 0} A[1]^{\otimes n} \, .
$$
To unwrap this definition we first recall that the suspended vector space $A[1]$ has homogeneous components $A[1]_{i}=A_{i+1}$, and that the standard comultiplication $\Delta: {T_{A}} \rightarrow {T_{A}} \otimes {T_{A}}$ is given by
$$
\Delta (a_{1}\otimes\ldots\otimes a_{n}) = \sum_{j=0}^n (a_{1}\otimes\ldots\otimes a_{j}) \otimes (a_{j+1}\otimes\ldots\otimes a_{n}) \, .
$$
Furthermore, a \textsl{coderivation} $\del\in\text{Coder}({T_{A}})$ is a linear operator on ${T_{A}}$ that satisfies the dual version of the product rule, 
\be\label{coLeibniz}
\Delta \circ \del = \left( \del \otimes \id_{{T_{A}}} + \id_{{T_{A}}} \otimes \del \right) \circ \Delta \, ,
\ee
and a \textsl{codifferential} is a coderivation that squares to zero, 
\be\label{del2}
\del^2 = 0 \, .
\ee

If we decompose the codifferential as
$$
\del = \sum_{m,n\geqslant 0} \del_{m}^n \quad\text{where}\quad \del_{m}^n \in \Hom(A[1]^{\otimes m},A[1]^{\otimes n}) \, ,
$$
then the maps
$$
r_{n} = \del_{n}^1: A[1]^{\otimes n} \longrightarrow A[1]
$$
are fundamental since it follows from~\eqref{coLeibniz} that all the other components of~$\del$ can be expressed in term of them:
\be\label{manytomany}
\del_{m}^n=\sum_{j=0}
^{n-1} \id_{A[1]}^{\otimes j} \otimes \del_{m-n+1}^1 \otimes \id_{A[1]}^{\otimes (n-j-1)} \, .
\ee
Thus the $A_{\infty}$-structure encoded in~$\del$ is equivalently described by the maps~$r_{m}$, for which the condition~\eqref{del2} translates into the bilinear constraints
\be\label{Ainfconst}
\sum_{i,j\geqslant 0, \, i+j\leqslant n} r_{n-j+1} \circ \left( \id_{A[1]}^{\otimes i} \otimes r_{j} \otimes \id_{A[1]}^{\otimes (n-i-j)} \right) = 0 
\ee
for all $n\geqslant 0$. We write $\del_{n}$ for the codifferential determined solely by $r_{n}$, and we have the decomposition $\del = \sum_{n\geqslant 0}\del_{n}$. 

A curved $A_{\infty}$-algebra $(A,r_{n})$ is said to be \textsl{cyclic} with respect to a pairing $\langle\,\cdot\,,\,\cdot\,\rangle: A\otimes A\rightarrow \C$ if
$$
\langle a_{0}, r_{n}(a_{1}\otimes\ldots\otimes a_{n})\rangle = (-1)^{\widetilde a_{0}(\widetilde a_{1}+\ldots+\widetilde a_{n})} \langle a_{n}, r_{n}(a_{2}\otimes\ldots\otimes a_{n}\otimes a_{0})\rangle
$$
for all $n\geqslant 0$ and all homogeneous $a_{i}$, where we write $\widetilde a$ for the degree of $a\in A[1]$. $(A,r_{n})$ is \textsl{unital} if there exists an element $e\in A_{0}$ such that $r_{2}(e\otimes a)=-a$, $r_{2}(a\otimes e)=(-1)^{\widetilde a} a$ for all homogeneous $a\in A[1]$, and all other products $r_{n}$ vanish if applied to a tensor product involving~$e$. 

A curved $A_{\infty}$-algebra $(A,r_{n})$ is called a \textsl{strong $A_{\infty}$-algebra} (or simply an \textsl{$A_{\infty}$-algebra}) if its \textsl{curvature} $r_{0}:\C\rightarrow A[1]$ vanishes. In this case~\eqref{Ainfconst} implies that $r_{1}$ is a differential, and $(A,r_{n})$ is called \textsl{minimal} if $r_{1}=0$. A minimal $A_{\infty}$-algebra $(A,r_{n})$ that is unital and cyclic with respect to a non-degenerate pairing is called \textsl{Calabi-Yau}.

\begin{remark}\label{oTSTviaAinf}
Calabi-Yau $A_{\infty}$-algebras appear in open topological string theory in the following way. Let~$H$ be the space of open string operators of the underlying TFT, and let $\{\psi_{i}\}$ be a basis. $H$~is equipped with a non-degenerate pairing which is the two-point-correlator $\langle\,\cdot\,,\,\cdot\,\rangle_{\text{disk}}$. By the results of \cite{hll0402,c0412149}, there exists a minimal and unital $A_{\infty}$-structure $(H,\widetilde r_{n})$ that is cyclic with respect to $\langle\,\cdot\,,\,\cdot\,\rangle_{\text{disk}}$, and the open string amplitudes $W_{i_{1}\ldots i_{n}}$ introduced in~\eqref{bdryamp} are encoded in this $A_{\infty}$-structure via the identity
$$
W_{i_{1}\ldots i_{n}} = \left\langle \psi_{i_{1}} , \widetilde r_{n-1}(\psi_{i_{2}}\otimes\ldots\otimes \psi_{i_{n}}) \right\rangle_{\text{disk}} \, .
$$
The maps $\widetilde r_{n}$ can be explicitly constructed using the minimal model theorem, see proposition~\ref{minmodthm} and the discussion in subsection~\ref{oTSTforLG} below. 
\end{remark}

Another special case of a curved $A_{\infty}$-algebra which is of particular interest is the one whose higher maps $r_{n}$ for $n\geqslant 3$ are all zero, and it is called a \textsl{curved differential graded (DG) algebra}. If we define
$$
C = r_{0}(1) \, ,\quad d=r_{1} \, , \quad a\cdot b = (-1)^{\widetilde a} r_{2}(a\otimes b)
$$
then in this case the constraints~\eqref{Ainfconst} become
\begin{align*}
& d(C) = 0 \, \quad d^2(a) = a\cdot C - C\cdot a \, , \\
& d(a\cdot b) = d(a)\cdot b + (-1)^{|a|}a\cdot d(b) \, , \quad (a\cdot b) \cdot c = a\cdot(b\cdot c)
\end{align*}
for all homogeneous $a,b,c\in A$, and the signs involving the degree $|a| = \widetilde a + 1$ of $a\in A$ arise from the Koszul rule. If the curvature~$C$ vanishes we have a DG algebra, and if both~$C$ and~$d$ vanish we are left with a graded associative algebra. 

\medskip

$A_{\infty}$-algebras are generalisations of DG (associative) algebras where the higher maps and their constraints measure to what extent associativity only holds up to homotopy. Similarly, $L_{\infty}$-algebras are generalisations of DG Lie algebras whose higher maps measure how much the Jacobi identity is violated. 

To give the precise definition let us consider a graded vector space~$V$ and denote by ${S_{V}}$ the space ${T_{V}}$ divided by the ideal generated by elements of the form $u\otimes v - (-1)^{|u| \, |v|} v\otimes u$ for homogeneous $u,v\in V$. If we write $v_{1}\wedge\ldots\wedge v_{n}$ for the element represented by $v_{1}\otimes\ldots\otimes v_{n}$ in the quotient space ${S_{V}}$, then the coproduct on ${S_{V}}$ is given by 
$$
v_{1}\wedge\ldots\wedge v_{n} \longmapsto \sum_{j=0}^n \sum_{\sigma\in\text{Sh}(j,n)} \varepsilon_{\sigma;v_{1},\ldots,v_{n}} (v_{\sigma(1)}\wedge\ldots\wedge v_{\sigma(j)}) \otimes (v_{\sigma(j+1)}\wedge\ldots\wedge v_{\sigma(n)})
$$
where $\text{Sh}(j,n)$ is the set of permutations~$\sigma$ of~$n$ elements that satisfy $\sigma(1)<\ldots<\sigma(j)$ and $\sigma(j+1)<\ldots<\sigma(n)$, and the sign $\varepsilon_{\sigma;v_{1},\ldots,v_{n}}$ is defined via $v_{\sigma(1)}\wedge\ldots\wedge v_{\sigma(n)} = \varepsilon_{\sigma;v_{1},\ldots,v_{n}} v_{1}\wedge\ldots\wedge v_{n}$. 

With this preparation a \textsl{curved $L_{\infty}$-algebra} is a graded vector space~$V$ together with a codifferential~$\mathfrak{d}$ of degree $+1$ on ${S_{V}}$. Again, the fact that~$\mathfrak{d}$ is a coderivation allows us to equivalently consider a family of maps $\ell_{n}: V[1]^{\wedge n}\rightarrow V[1]$ with constraints coming from the condition $\mathfrak{d}^2=0$. For more details we refer e.\,g.~to~\cite{lm9406095,KSbook}. 

In the present paper we will only be concerned with the special case of a \textsl{DG Lie algebra} where the maps $\ell_{0}$ and $\ell_{n}$ for $n\geqslant 3$ all vanish. If we write
\be\label{dandbracket}
d=\ell_{1} \, , \quad  [u,v] = (-1)^{\widetilde u} \ell_{2}(u\wedge v)
\ee
then the defining conditions read
\begin{align*}
& [u,v] = (-1)^{|u| \, |v|} [v,u] \, , \quad d([u,v]) = [d(u),v] + (-1)^{|u|} [u,d(v)] \, ,\\
& (-1)^{|u| \, |w|} [u,[v,w]] + (-1)^{|u| \, |v|} [v,[w,u]] + (-1)^{|v| \, |w|} [w,[u,v]] = 0 \, .
\end{align*}

\medskip

Given two curved $A_{\infty}$-algebras $(A,\del)$ and $(A',\del')$, a \textsl{(weak) $A_{\infty}$-morphism} between them is a morphism $F\in\Hom({T_{A}},{T_{A'}})$ of degree 0 between the associated codifferential coalgebras, i.\,e.
\be\label{Ainfmorph}
\Delta \circ F = (F\otimes F)\circ \Delta \, , \quad F \circ \del = \del' \circ F \, .
\ee
If we decompose
$$
F = \sum_{m,n\geqslant 0} F_{m}^n \quad\text{where}\quad F_{0}^0 = 1 \, , \quad F_{m}^n \in \Hom(A[1]^{\otimes m}, A'[1]^{\otimes n})
$$
 then the first equation in~\eqref{Ainfmorph} implies that
 $$
 F_{m}^n = \sum_{j_{1}+\ldots+j_{n}=m} F_{j_{1}}^1 \otimes \ldots \otimes F_{j_{n}}^1
 $$
 can be expressed in terms of the maps $F_{n} = F_{n}^1$. We call~$F$ a weak \textsl{$A_{\infty}$-isomorphism} if $F_{1}$ is an isomorphism of vector spaces. A weak $A_{\infty}$-morphism~$F$ between strong $A_{\infty}$-algebras $(A,r_{n})$ and $(A',r'_{n})$ is called an \textsl{$A_{\infty}$-quasi-isomorphism} if $F_{1}$ induces a vector space isomorphism between the cohomologies of $r_{1}$ and $r'_{1}$, and it is called a \textsl{(strong) $A_{\infty}$-morphism} if $F_{m}^n=0$ whenever $n>m$. 
 
Analogous definitions hold for \textsl{weak $L_{\infty}$-morphisms} which are morphisms between codifferential coalgebras ${S_{V}}$ and ${S_{V'}}$ (again, see e.\,g.~\cite{lm9406095,KSbook} for details). 
 
 \medskip
 
The general theory of non-curved $A_{\infty}$- and $L_{\infty}$-algebras is developed considerably further than that of the curved case (see, however, \cite{n0702449,p0905.2621}). In particular, the central \textsl{minimal model theorem}~\cite{k0504437,m9809} that we review next has no known counterpart in the curved case. It will suffice to discuss only $A_{\infty}$-algebras. 

The content of the minimal model theorem is that every $A_{\infty}$-algebra $(A,r_{n})$ is related to an $A_{\infty}$-structure $(H=H_{r_{1}}(A), \widetilde r_{n})$ on $r_{1}$-cohomology via an $A_{\infty}$-quasi-isomorphism $F: (H,\widetilde r_{n})\rightarrow (A,r_{n})$. Furthermore, up to $A_{\infty}$-isomorphisms $(H,\widetilde r_{n}$) is the unique minimal $A_{\infty}$-algebra that is $A_{\infty}$-quasi-isomorphic to $(A,r_{n})$. 

To obtain explicit expressions for the maps $\widetilde r_{n}$ and $F_{n}$ we first choose a vector space decomposition
\be\label{AHBL}
A = H\oplus B\oplus L
\ee
where $B=\text{Im}(r_1)$ and $L$ is the complement of $\text{Ker}(r_{1})$. It follows that~$H$ gives a choice of representatives of elements in $H_{r_{1}}(A)$, and in this sense we identify these two spaces. In particular the choice of decomposition amounts to a choice of a homotopy on~$A$, i.\,e.~a map $G:A\rightarrow A$ with $r_1  G + G r_1 = \id_{A[1]} - \pi_H$ and $G^2 = 0$, where $\pi_{H}:A\rightarrow H$ denotes the projection onto~$H$. We also have projections $\pi_B = r_1 \circ G$ and $\pi_L = G \circ r_1$.
With this setup, the $A_{\infty}$-structure on~$H$ and the associated quasi-isomorphism are given by the following proposition.

\begin{proposition}\label{minmodthm}
Let $(A, \partial)$ be a strong $A_{\infty}$-algebra with $r_{1}$-cohomology~$H$. There is a unique coalgebra morphism $F \in \mathrm{Hom}({T_{H}}, {T_{A}})$ and a unique minimal $A_{\infty}$-structure $\widetilde{\partial} \in \mathrm{Coder}({T_{H}})$ that satisfy the equations
\begin{align}
\partial F &= F \widetilde{\partial}\label{Fmorph} \, ,\\
\widetilde{\partial}_1^1 &= 0 \, , \nonumber \\
F_1^1 &= \iota_H: H[1] \hookrightarrow A[1] \label{F11} \, ,\\
F_n^1 &= -G \sum_{k=2}^n \partial_k^1 F_n^k \, .
\label{Fn1}
\end{align}
\end{proposition}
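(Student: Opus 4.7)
My plan is an induction on the arity $n$, constructing the corestrictions $F_n^1$ and the products $\widetilde{r}_n = \widetilde{\partial}_n^1$ in parallel. Since $F$ is a coalgebra morphism, the higher components $F_n^k$ (for $k\geq 2$) are determined by the $F_j^1$ with $j<n$ via $F_n^k = \sum_{j_1+\cdots+j_k=n} F_{j_1}^1\otimes\cdots\otimes F_{j_k}^1$, and similarly $\widetilde{\partial}$ as a coderivation is determined by $\widetilde{r}_n$ via~\eqref{manytomany}. Hence the prescriptions~\eqref{F11}--\eqref{Fn1} together with $\widetilde{r}_1 = 0$ recursively fix every $F_n^k$ from lower data, leaving only $\widetilde{r}_n$ for $n\geq 2$ free. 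The base case $n=1$ is immediate: equation~\eqref{Fmorph} in arity one reduces to $r_1 \iota_H = 0$, which holds since $H \subset \ker r_1$.

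For the inductive step, assume $F_k^1$ and $\widetilde{r}_k$ have been constructed for $k<n$ with~\eqref{Fmorph} holding in arities below $n$. Write $P_n := \sum_{k=2}^n \partial_k^1 F_n^k$ and $Q_n := \sum_{k=2}^{n-1} F_k^1 \widetilde{\partial}_n^k$; both are determined by the inductive data. The arity-$n$ corestriction of~\eqref{Fmorph} reads $r_1 F_n^1 + P_n = \iota_H \widetilde{r}_n + Q_n$; substituting $F_n^1 = -G P_n$ from~\eqref{Fn1} and using $r_1 G + G r_1 = \id_{A[1]} - \pi_H$ together with $\pi_L = G r_1$ rearranges this to $\pi_H P_n + \pi_L P_n = \iota_H \widetilde{r}_n + Q_n$. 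Projecting by $\pi_H$ forces the definition $\widetilde{r}_n := \pi_H P_n$, the $\pi_H Q_n$-contribution vanishing because each $F_k^1$ (for $k\geq 2$) lands in $L$, as $F_k^1 = -G P_k$ and the canonical homotopy $G$ associated to the decomposition~\eqref{AHBL} has image contained in $L$. The remaining content of the equation is the identity $\pi_L P_n = Q_n$.

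I expect verifying $\pi_L P_n = Q_n$ to be the main technical obstacle. One writes $\pi_L P_n = G r_1 \sum_{k=2}^n r_k F_n^k$ and uses the quadratic $A_\infty$ relations~\eqref{Ainfconst} to replace each $r_1 r_k$ by compositions $r_{k'}(\id^{\otimes i}\otimes r_{k''}\otimes\id^{\otimes j})$ with $k', k'' \geq 2$. Regrouping, subclusters of the form $\pi_H\sum r_{k'} F_m^{k'}$ with $m<n$ arise; by the inductive form of~\eqref{Fmorph} these equal $\widetilde{r}_m$, and reassembling reproduces $Q_n$ term by term. The identity is most transparent in the language of signed sums over rooted planar trees with $n$ leaves, vertices labelled by the $r_k$, and internal edges by $G$.

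Once~\eqref{Fmorph} is established, the codifferential property $\widetilde{\partial}^2 = 0$ follows from $F \widetilde{\partial}^2 = \partial F \widetilde{\partial} = \partial^2 F = 0$ by a corestriction induction: $\widetilde{\partial}^2$ is again a coderivation (being the square of an odd coderivation), so if $(\widetilde{\partial}^2)_m^1 = 0$ for all $m<n$, the coderivation property gives $(\widetilde{\partial}^2)_n^k = 0$ for $k\geq 2$, whence $(F\widetilde{\partial}^2)_n^1$ collapses to $\iota_H(\widetilde{\partial}^2)_n^1$, which vanishes by injectivity of $\iota_H$. Uniqueness is then automatic from the construction: \eqref{F11}--\eqref{Fn1} uniquely pin down $F_n^1$ at every arity, and the $\pi_H$-projection of~\eqref{Fmorph} uniquely pins down $\widetilde{r}_n$.
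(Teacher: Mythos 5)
Your proposal is correct and takes essentially the same route as the paper: an induction on arity in which $\widetilde r_n$ is defined as the $\pi_H$-projection of $\sum_{k\geqslant 2}\partial_k^1F_n^k$, the $\pi_B$-component is absorbed by the recursion $F_n^1=-G\sum_k\partial_k^1F_n^k$, the $\pi_L$-identity is reduced to $\partial^2=0$ combined with the inductive validity of $\partial F=F\widetilde\partial$ at lower arity, and $\widetilde\partial^2=0$ follows from $\partial^2F=0$ by a corestriction argument. The one step you leave as a sketch, $\pi_L P_n=Q_n$, is exactly the paper's three-line computation read in the opposite direction (the paper substitutes the recursion for $F_k^1$ into $Q_n$, commutes $\widetilde\partial$ through $F$ by induction, and invokes $\partial^2=0$), so your outlined ingredients do suffice.
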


\begin{proof}
First we will show that the condition that $F$ be an $A_{\infty}$-morphism follows from the conditions above, then we show that $\widetilde{\partial}$ is indeed a codifferential. Since $F$ is a coalgebra morphism, \eqref{Fmorph} reduces to
$$
\partial_1^1F_n^1 + \sum_{l = 2}^{n}\partial_l^1F_n^l = \sum_{k = 1}^{n-1}F_k^1 \widetilde{\partial}_n^k
$$
for all $n \geqslant 1$. We rewrite the above set of equations by splitting them into three parts: 
\begin{align}
\pi_H \Big( \sum_{l = 2}^{n}\partial_l^1F_n^l\Big) &= \widetilde{\partial}_n^1 \label{tildepartial}\, ,\\\pi_B \Big( \partial_1^1 F_n^1 + \sum_{l = 2}^{n}\partial_l^1F_n^l\Big) &= 0 \label{piBminimal}\, ,\\
\pi_L \Big( \sum_{l=2}^n\partial_l^1F_n^l - \sum_{k = 1}^{n-1}F_k^1 \widetilde{\partial}_n^k\Big)  &=0 \, .\label{piLminimal}
\end{align}
The first immediate observation is that $\widetilde{\partial}$ is uniquely determined by~\eqref{tildepartial}. 
Equation~\eqref{piBminimal} follows by employing~\eqref{Fn1} and~\eqref{F11},  and to show~\eqref{piLminimal} we compute
\begin{align*}
\pi_L \Big(\sum_{l=2}^n\partial_l^1F_n^l - \sum_{k = 1}^{n-1}F_k^1 \widetilde{\partial}_n^k\Big) &= \pi_L\Big(\sum_{l=2}^n \partial_l^1F_n^l + G\sum_{k=2}^{n-1}\sum_{r=2}^{k}\partial_r^1F_k^r\widetilde{\partial}_n^k\Big)\\
&= \pi_L\Big(\sum_{l=2}^n \partial_l^1F_n^l + G\sum_{k=2}^{n-1}\sum_{r=2}^{k}\partial_r^1\partial_k^r F_n^k\Big) = 0 \, .
\end{align*}
In the first step we used~\eqref{Fn1}, the second step is the induction step which allows us to commute $\widetilde{\partial}$ through $F$, and in the last step we used $\partial^2 = 0$. 

To show that $\widetilde{\partial}^2 = 0$ we note that
$$
\sum_{k =1}^n\widetilde{\partial}_k^1\widetilde{\partial}_n^k = \sum_{k=1}^n\pi_H\Big(\sum_{l = 2}^{k}\partial_l^1F_k^l \widetilde{\partial}_n^k\Big) = \sum_{k=1}^n\pi_H \Big( \sum_{l = 2}^{k}\partial_l^1\partial_k^l F_n^k\Big) = 0 \, .
$$
\end{proof}

 \medskip
 
In this paper we are concerned with the question of how to construct bulk deformations of an open topological string theory that we describe as an $A_{\infty}$-algebra. More generally, a \textsl{deformation} of a (strong) $A_{\infty}$-algebra $(A,\del)$ is an operator $\delta\in\End({T_{A}})$ such that $(A,\del+\delta)$ is a curved $A_{\infty}$-algebra. We see that~$\delta$ must be a coderivation of degree $+1$, and the condition $(\del+\delta)^2=0$ becomes
\be\label{MC12}
[\del,\delta] + \frac{1}{2}\, [\delta,\delta] = 0 \, ,
\ee
where $[\,\cdot\,,\,\cdot\,]$ denotes the graded commutator. This is the \textsl{Maurer-Cartan equation} for the DG Lie algebra $\text{Coder}({T_{A}})$ with differential $[\del,\,\cdot\,]$ and bracket $[\,\cdot\,,\,\cdot\,]$. Thus solving the Maurer-Cartan equation for the special $L_{\infty}$-algebra $(\text{Coder}({T_{A}}), [\del,\,\cdot\,], [\,\cdot\,,\,\cdot\,])$ is equivalent to solving the deformation problem for the $A_{\infty}$-algebra $(A,\del)$.\footnote{Note that we slightly abuse notation here and below as the bracket $[\,\cdot\,,\,\cdot\,]$ and the map $\ell_{2}$ are only equal up to a sign, see equation~\eqref{dandbracket}.} 

For an arbitrary $L_{\infty}$-algebra $(V,\ell_{n})$ its Maurer-Cartan equation reads
$$
\sum_{n\geqslant 1} \frac{1}{n!}\, \ell_{n}(\delta^{\wedge n}) = 0 \, ,
$$
and we denote by $\MC(V,\ell_{n})$ its space of formal power series solutions $\delta\in V_{1}$ modulo the action of the group generated by \textsl{gauge transformations}
$$
\delta \longmapsto \delta + \sum_{n\geqslant 1}\frac{1}{(n-1)!} \, \ell_{n}(\varphi \wedge \delta^{\wedge (n-1)})
$$
for all $\varphi\in V_{0}$. 

In the case of the deformation problem~\eqref{MC12}, gauge transformations lead to isomorphic $A_{\infty}$-structures $(A,\del)\cong(A,\del+\delta)$. Thus 
$$
\MC(\text{Coder}({T_{A}}), [\del,\,\cdot\,], [\,\cdot\,,\,\cdot\,])
$$
is the space we wish to determine, for which the following important result of~\cite{k9709040,m0001007} will prove useful in the next section. 
\begin{proposition}\label{prop:MCsolutions}
Let $F: (V,\ell_{n})\rightarrow (V',\ell'_{n})$ be an $L_{\infty}$-morphism. Then
\be\label{MCmap}
\delta \longmapsto \sum_{n\geqslant 1} \frac{1}{n!} \, F_{n}(\delta^{\wedge n}) 
\ee
maps elements in $\MC(V,\ell_{n})$ to elements in $\MC(V',\ell'_{n})$. Furthermore, if~$F$ is an $L_{\infty}$-quasi-isomorphism, then~\eqref{MCmap} is an isomorphism. 
\end{proposition}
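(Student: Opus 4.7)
The plan is to reformulate Maurer-Cartan theory coalgebraically at the level of the coalgebras ${S_{V}}$ and ${S_{V'}}$ and exploit the defining property $F\mathfrak{d} = \mathfrak{d}'F$ of an $L_{\infty}$-morphism. For a formal-power-series element $\delta\in V_{1}$ first introduce the group-like element
$$
e^{\delta} := \sum_{n\geqslant 0} \frac{1}{n!}\, \delta^{\wedge n}
$$
in a suitable completion of ${S_{V}}$, characterised by $\Delta(e^{\delta}) = e^{\delta}\otimes e^{\delta}$. A direct calculation using the coLeibniz rule together with the decomposition of $\mathfrak{d}$ into its brackets $\ell_{n}$ yields
$$
\mathfrak{d}(e^{\delta}) = e^{\delta}\wedge \mathcal{Q}(\delta) \, , \qquad \mathcal{Q}(\delta) := \sum_{n\geqslant 1}\frac{1}{n!}\, \ell_{n}(\delta^{\wedge n}) \, ,
$$
so that $\delta$ is a Maurer-Cartan solution of $(V,\ell_{n})$ if and only if $\mathfrak{d}(e^{\delta}) = 0$ (the element $e^{\delta}$ is a non-zero-divisor under $\wedge$ since it starts with $1$). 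The same identity holds on the primed side.

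With this in place the first assertion follows almost formally. Since $F: {S_{V}}\to {S_{V'}}$ is a coalgebra morphism, it sends group-like elements to group-like elements, and unpacking the definition of $F$ in terms of its components $F_{n}$ identifies the image as
$$
F(e^{\delta}) = e^{F_{*}(\delta)} \, , \qquad F_{*}(\delta) := \sum_{n\geqslant 1} \frac{1}{n!}\, F_{n}(\delta^{\wedge n}) \, ,
$$
which is precisely the assignment~\eqref{MCmap}. Combining this with the intertwining property of $F$ gives
$$
\mathfrak{d}'\bigl(e^{F_{*}(\delta)}\bigr) = F\bigl(\mathfrak{d}(e^{\delta})\bigr) = F\bigl(e^{\delta}\wedge \mathcal{Q}(\delta)\bigr) \, ,
$$
which vanishes whenever $\mathcal{Q}(\delta) = 0$; hence $F_{*}(\delta)$ solves the Maurer-Cartan equation in $V'$. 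To pass to the moduli $\MC(V,\ell_{n})\to\MC(V',\ell'_{n})$, the same machinery handles gauge transformations: the infinitesimal gauge action by $\varphi \in V_{0}$ lifts to a vector field on ${S_{V}}$ intertwined with its counterpart on ${S_{V'}}$ through $F$, so orbits are mapped to orbits.

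For the second assertion assume $F_{1}$ induces an isomorphism on $\ell_{1}$-cohomology. The strategy is to construct an inverse of $F_{*}$ on Maurer-Cartan moduli by an order-by-order induction in the formal parameter. Given $\delta' = \sum_{k\geqslant 1}\hbar^{k}\delta'_{k}\in\MC(V',\ell'_{n})$, one inductively determines coefficients $\delta_{k}\in V_{1}$ so that $\delta = \sum_{k}\hbar^{k}\delta_{k}$ satisfies $\mathcal{Q}(\delta) = 0$ with $F_{*}(\delta)$ gauge-equivalent to $\delta'$. At each step the obstruction to extending a partial lift sits in $H^{2}_{\ell_{1}}(V)$ and the residual gauge-ambiguity in $H^{1}_{\ell_{1}}(V)$, both of which are identified with their counterparts for $V'$ via the quasi-isomorphism $F_{1}$, guaranteeing existence and uniqueness modulo gauge.

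The main obstacle is precisely this inductive lift: handling the tower of correction terms produced by the higher components $\ell_{n\geqslant 2}$, $\ell'_{n\geqslant 2}$, and $F_{n\geqslant 2}$ requires careful combinatorial bookkeeping of the Maurer-Cartan polynomials on both sides together with a consistent choice of contracting homotopy for $\ell_{1}$ (analogous in spirit to the splitting~\eqref{AHBL} used in the proof of proposition~\ref{minmodthm}). A clean way to organise this induction is to invoke the formal deformation-functor framework of~\cite{k9709040,m0001007}, where the claim is repackaged as the statement that quasi-isomorphic $L_{\infty}$-algebras represent equivalent Maurer-Cartan functors on Artin local algebras, and where compatibility with~\eqref{MCmap} is built in by construction.
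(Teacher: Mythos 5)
Your proof is correct and follows essentially the same route as the paper: your group-like element $e^{\delta}$ is precisely the paper's weak coalgebra morphism $M$ (with $M_0^1=\delta$, $M_1^1=\id$) evaluated on $1\in\C\subset {S_{V}}$, and the identity $\mathfrak{d}(e^{\delta})=e^{\delta}\wedge\mathcal{Q}(\delta)$ together with $F\mathfrak{d}=\mathfrak{d}'F$ is the same computation as the paper's $(\mathfrak{d}'FM)_0^1=(F\mathfrak{d}M)_0^1=0$. For the quasi-isomorphism statement both you and the paper ultimately defer to the standard invariance of Maurer-Cartan moduli under $L_{\infty}$-quasi-isomorphisms from~\cite{k9709040,m0001007}, you via the obstruction-theoretic induction and the paper via the existence of homotopy inverses.
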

\begin{proof}
Denote by~$\frak{d}$ the coderivation determined by the higher maps $\ell_n$. Consider a weak coalgebra morphism $M \in \mathrm{End}({S_{V}})$ with the properties $M_0^1 = \delta$ and $M_1^1 = \mathrm{id}_{V[1]}$, where~$\delta$ is a solution to the Maurer-Cartan equation for~$\frak{d}$. Then we have $M_0^n = \frac{1}{n!} \delta^{\wedge n}$ and the Maurer-Cartan equation for~$\frak{d}$ can be rewritten as
$$
(\mathfrak{d}M)_0^1 = 0 \, .
$$

Now let~$\frak{d}'$ denote the coderivation corresponding to the products $\ell_n'$. We have
$$
(\mathfrak{d}' FM)_0^1 = (F \mathfrak{d}M)_0^1 = \sum_{k \geqslant 1}F_k^1 (\mathfrak{d}M)_0^k = 0
$$
where the last equation follows from the fact that $(\mathfrak{d}M)_0^k$ is uniquely determined by $(\mathfrak{d}M)_0^1$. We have thus shown that $\delta' = (FM)_0^1$ solves the Maurer-Cartan equation for $\mathfrak{d}'$. In expanded form this reads
$$
\delta' = \sum_{n \geqslant 1}F_n^1 M_0^n = \sum_{n \geqslant 1}\frac{1}{n!}F_n^1(\delta^{\wedge n}) \, .
$$

If~$F$ is a quasi-isomorphism, then it is an isomorphism between the spaces of first order deformations, and as it admits homotopy inverses, this isomorphism extends to all orders. 
\end{proof}

We also observe that solving~\eqref{MC12} up to gauge transformations only to first order is the same as computing the cohomology
$$
\HH^\bullet(A,\del) = H_{[\del,\,\cdot\,]}(\text{Coder}({T_{A}}))
$$
of the \textsl{Hochschild cochain complex} $(\text{Coder}({T_{A}}), [\del,\,\cdot\,])$. In this sense Hochschild cohomology $\HH^\bullet(A,\del)$ classifies deformations of $(A,\del)$. 

\begin{remark}\label{HHremark}
There is an important subtlety in the definition of the Hochschild cochain complex. As we saw in our discussion of equation~\eqref{manytomany}, coderivations of ${T_{A}}$ are isomorphic to collections of multilinear maps. However, one may either consider finitely or infinitely many multilinear maps, so there are actually Hochschild cochain complexes of the \textsl{first kind} and of the \textsl{second kind},\footnote{There is a much deeper origin of the two different kinds of Hochschild complexes, see~\cite{p0905.2621,pp1010.0982}} 
\begin{align*}
\text{Coder}({T_{A}})^{\text{I}} & \cong \prod_{n\geqslant 0} \Hom(A[1]^{\otimes n}, A[1]) \, , \\
\text{Coder}({T_{A}})^{\text{II}} & \cong \bigoplus_{n\geqslant 0} \Hom(A[1]^{\otimes n}, A[1]) \, .
\end{align*}
Both complexes are endowed with the same differential $[\del,\,\cdot\,]$, but they have different invariance properties: Hochschild cohomology of the first kind is invariant under (strong) $A_{\infty}$-quasi-isomorphisms~\cite{l0204062}, while Hochschild cohomology of the second kind is invariant under weak $A_{\infty}$-isomorphisms~\cite{pp1010.0982}. 

As follows from the discussion of the next section it is Hochschild cohomology of the second kind that describes \textsl{all} bulk deformations. On the other hand, Hochschild cohomology of the first kind cannot always capture all bulk fields, in fact for certain branes it actually vanishes. Thus we simply write $\HH^\bullet(A,\del)$ for Hochschild cohomology of the second kind in this paper. 
\end{remark}

\subsection{B-twisted Landau-Ginzburg models}\label{subsec:LGmodels}

\subsubsection{Open/closed topological field theory}

We begin our discussion of B-twisted Landau-Ginzburg models by recalling that they give examples of full open/closed topological field theories in the sense of~\cite{l0010269,ms0609042}. The on-shell bulk sector~\cite{v1991} of such a model with affine target space $X=\C^N$ and potential $W\in R=\C[x_{1},\ldots,x_{N}]$ is given by the Jacobian\footnote{We always assume that~$W$ has an isolated singularity at the origin, i.\,e.~$\Jac(W)$ is finite-dimensional.}
\be\label{JacW}
\Jac(W) = R/(\del_{x_{1}}W,\ldots,\del_{x_{N}}W) \, .
\ee
This is the cohomology of the off-shell DG Lie algebra 
$$
T_{\text{poly}} = \left( \Gamma(X, \bigwedge T^{(1,0)}X), [-W, \,\cdot\,]_{\text{SN}}, [\,\cdot\,,\,\cdot\,]_{\text{SN}} \right)
$$
of polyvector fields, where $[\,\cdot\,, \,\cdot\,]_{\text{SN}}$ is the Schouten-Nijenhuis bracket extending the Lie bracket from vector fields to polyvector fields; we will recall the definition in section~\ref{sec:bulk-deformedLG}. The on-shell space $\Jac(W)$, viewed as the minimal model of $T_{\text{poly}}$, has a trivial $L_{\infty}$-structure. 

For a bulk field $\phi\in\Jac(W)$ its one-point-correlator is
$$
\langle\phi\rangle = \Res \left[\frac{\phi\, \D x_{1}\wedge\ldots\wedge \D x_{N}}{\del_{x_{1}}W\ldots\del_{x_{N}}W}\right] \, , 
$$
and the bulk topological metric
\be\label{bulktopmet}
\Jac(W) \otimes \Jac(W) \longrightarrow \C \, , \quad \phi_{1}\otimes \phi_{2} \longmapsto \langle\phi_{1}\phi_{2}\rangle
\ee
is known to be non-degenerate~\cite{GandH}. 

\medskip

The on-shell boundary sector is described by the category of matrix factorisations $\MFW$~\cite{kl0210,bhls0305,l0312}. We take its objects, interpreted as D-branes, to be odd square supermatrices~$D$ with polynomial entries such that $D^2=W\cdot e$, where~$e$ is the identity matrix of the same size as~$D$. Such matrix factorisations are also the objects of the off-shell DG category $\DGW$ whose morphisms $D\rightarrow D'$ are polynomial supermatrices~$\psi$ such that the compositions $D'\psi$ and $\psi D$ make sense. The differential on the $\Z_{2}$-graded spaces $\Hom_{\DGW}(D,D')$ is the boundary BRST operator
$$
d: \psi \longmapsto D'\psi - (-1)^{|\psi|} \psi D
$$
for homogeneous~$\psi$. The on-shell D-brane category is defined to be the homotopy category
\be\label{MFW}
\MFW = H_{d}^\bullet(\DGW) \, .
\ee

Given an open string operator $\psi\in\End_{\MFW}(D)$, it was argued in~\cite{kl0305,hl0404} that its one-point-correlator is the \textsl{Kapustin-Li trace}
$$
\langle\psi\rangle_{D} = (-1)^{N\choose 2} \Res \left[ \frac{\str(\psi\, \del_{x_{1}}D\ldots \del_{x_{N}}D)}{\del_{x_{1}}W\ldots\del_{x_{N}}W} \right] \, , 
$$
and the boundary topological metric is the \textsl{Kapustin-Li pairing}
\be\label{bdrytopmet}
\Hom_{\MFW}(D,D') \otimes \Hom_{\MFW}(D',D) \longrightarrow \C \, , \quad \psi_{1}\otimes \psi_{2} \longmapsto \langle \psi_{1}\psi_{2} \rangle_{D} \, .
\ee
The non-degeneracy of this pairing was proved in~\cite{m0912.1629,dm1004.0687}. 

\medskip

To complete the structure of open/closed topological field theory for Landau-Ginzburg models we also need the (on-shell) bulk-boundary and boundary-bulk maps~\cite{kr0405232}:
\begin{align}
\label{bubo} & \beta^{(D)}_{\text{bubo}}: \Jac(W) \longrightarrow \End_{\MFW}(D) \, , \quad \phi \longmapsto \phi \cdot e \, ,\\
\label{bobu} & \beta^{(D)}_{\text{bobu}}: \End_{\MFW}(D) \longrightarrow \Jac(W) \, , \quad \psi \longmapsto (-1)^{N\choose 2} \str(\psi\, \del_{x_{1}}D\ldots \del_{x_{N}}D) \, .
\end{align}

It is mostly straightforward to show that the data~\eqref{JacW}--\eqref{bobu} satisfy all the axioms of an open/closed topological field theory of~\cite{l0010269,ms0609042}. Apart from the non-degeneracy of the Kapustin-Li pairing, the only exception is the Cardy condition
$$
\str\big( \psi \longmapsto \psi_{2}\psi\psi_{1} \big) = \big\langle \beta^{(D)}_{\text{bobu}}(\psi_{1}) \, \beta^{(D')}_{\text{bobu}}(\psi_{2}) \big\rangle
$$
for open string operators $\psi_{1}:D\rightarrow D$, $\psi_{2}:D'\rightarrow D'$ and $\psi:D\rightarrow D'$, which was proved in~\cite{pv1002.2116}.

\subsubsection{Open topological string theory}\label{oTSTforLG}

Our main motivation is to enrich the above structure of open/closed topological \textsl{field} theory for Landau-Ginzburg models to that of open/closed topological \textsl{string} theory. In the next section we will face this task by studying bulk-deformed open topological string theory; here we very briefly review the construction of the pure boundary sector. 

As we recalled in the introduction, constructing open topological string theories from Landau-Ginzburg models amounts to endowing the on-shell open string spaces
$$
H=\End_{\MFW}(D)
$$
with the correct $A_{\infty}$-structures for all matrix factorisations~$D$. The first natural step is to use the minimal model construction of proposition~\ref{minmodthm} to transport the DG structure~$\del$ (with $r_{1}=[D,\,\cdot\,]$ and $r_{2}$ coming from matrix multiplication) of the off-shell space 
$$
A=\End_{\DGW}(D)
$$
to an $A_{\infty}$-structure on~$H$. 

Unfortunately, the thus obtained $A_{\infty}$-structure on the on-shell space~$H$ is generically not cyclic with respect to the Kapustin-Li pairing~\eqref{bdrytopmet} and does hence not correctly encode the topological string amplitudes. As explained in detail in~\cite{c0904.0862} one can however use methods of formal non-commutative geometry~\cite{kontsevich1993,ks0606241} to construct the full structure of open topological string theory. The main idea of this approach is to enhance the Kapustin-Li pairing~\eqref{bdrytopmet} to a non-commutative symplectic form that is suitably compatible with the generic $A_{\infty}$-structure on~$H$ and then use a version of the Darboux theorem to transform it  to a cyclic structure $\widetilde\del$. An alternative, non-explicit existence proof of cyclic $A_{\infty}$-structures on~$H$ was sketched in~\cite{dm1004.0687}, and in subsection~\ref{omegaconstruction} below we will sketch how to drastically improve the above-mentioned explicit construction using symplectic forms. 

\medskip

The goal of our study in the present paper are bulk-induced deformations of the open topological string theory encoded in the on-shell $A_{\infty}$-structure $(H,\widetilde\del)$. We have recalled in this section how this originates from the off-shell DG structure $(A,\del)$, and also that deformations of $A_{\infty}$-structures are classified by Hochschild cohomology. Thus it is a welcome fact that the latter is given by the space of bulk fields:
$$
\HH^\bullet(A,\del) \cong \Jac(W) \, .
$$
This result will be proved along with theorem~\ref{Th1} in the next section. We further note that the Hochschild cohomology of the full category $\DGW$ is also given by $\Jac(W)$ as shown in~\cite{d0904.4713}.

\section{Bulk-deformed Landau-Ginzburg models}\label{sec:bulk-deformedLG}

The notion of bulk-induced deformations of the on-shell boundary space $(H, \widetilde{\partial})$ relies on the existence of an $L_{\infty}$-morphism 
\be\label{Lmorph} 
L: (T_{\text{poly}}, [-W, \,\cdot\,]_{\text{SN}}, [\,\cdot\,,\,\cdot\,]_{\text{SN}}) \longrightarrow (\mathrm{Coder}({T_{H}}), [\widetilde{\partial}, \,\cdot\,], [\,\cdot\,,\,\cdot\,])
\ee
from the DG Lie algebra of off-shell bulk fields to the DG Lie algebra of coderivations on the boundary side. A \textsl{bulk-induced deformation} is then defined as the image under $L$ of a deformation of the pure bulk theory. In this section we give an explicit construction of $L$. 

The map~\eqref{Lmorph} splits naturally as the composition of two $L_{\infty}$-morphisms that we discuss in subsections~\ref{WDQ} and~\ref{transbulkonshell}, respectively. The first is an $L_{\infty}$-quasi-isomorphism 
\be\label{L1}
(T_{\text{poly}}, [-W, \,\cdot\,]_{\text{SN}}, [\,\cdot\,,\,\cdot\,]_{\text{SN}}) \longrightarrow (\mathrm{Coder}({T_{A}}), [\partial, \,\cdot\,], [\,\cdot\,,\,\cdot\,])
\ee
and thus identifies the two deformation problems. It can be viewed as a ``weak'' version of Kontsevich's construction for (local) deformation quantisation, or rather 
 its complex cousin. 
The second $L_{\infty}$-morphism 
\be\label{L2}
(\mathrm{Coder}({T_{A}}), [\partial, \,\cdot\,], [\,\cdot\,,\,\cdot\,]) \longrightarrow (\mathrm{Coder}({T_{H}}), [\widetilde\partial, \,\cdot\,], [\,\cdot\,,\,\cdot\,])
\ee
transports off-shell deformations on-shell and can be viewed as the $L_{\infty}$-formulation and enhancement of the homological perturbation lemma.\footnote{As we are concerned with the Hochschild cochain complex of the second kind, the map~\eqref{L1} is a quasi-isomorphism, while the map~\eqref{L2} is generically not. However, if we used the Hochschild cochain complex of the first kind, \eqref{L2} would always be a quasi-isomorphism, but generically not~\eqref{L1}. See also remark~\ref{HHremark}.}

\subsection{Weak deformation quantisation}\label{WDQ}

We continue to use the notation of the previous section; in particular we fix a potential $W\in R=\C[x_{1},\ldots,x_{N}]$, a matrix factorisation~$D$, and with it the off-shell and on-shell open string algebras $A=\End_{\DGW}(D)$ and $H=\End_{\MFW}(D)$, respectively. 

As follows from our earlier discussion, bulk deformations of B-twisted Landau-Ginzburg models are solutions $\gamma \in T_{\text{poly}}$ of the Maurer-Cartan equation
\begin{equation}\label{MCTpoly}
[-W, \gamma]_{\text{SN}} + \frac{1}{2}[\gamma, \gamma]_{\text{SN}} = 0 \, ,
\end{equation}
where the Schouten-Nijenhuis bracket on $T_{\text{poly}}$ is given by
\begin{align*}
& [\zeta_{1}\wedge\ldots\wedge \zeta_{m}, \xi_{1}\wedge\ldots\wedge \xi_{n} ]_{\text{SN}} = 
\sum_{i=1}^m \sum_{j=1}^n (-1)^{i+j} [\zeta_{i},\xi_{j}] \\
&
\qquad \cdot \zeta_{1}\wedge\ldots\wedge\zeta_{i-1}\wedge\zeta_{i+1}\wedge\ldots\wedge \zeta_{m}
\wedge \xi_{1}\wedge\ldots\wedge\xi_{j-1}\wedge\xi_{j+1}\wedge\ldots\wedge \xi_{n}
\end{align*}
and
$$
[\zeta_{1}\wedge\ldots\wedge \zeta_{m}, f ]_{\text{SN}} = \sum_{i=1}^m (-1)^i \zeta_{i}(f) \zeta_{1}\wedge\ldots\wedge\zeta_{i-1}\wedge\zeta_{i+1}\wedge\ldots\wedge \zeta_{m}
$$
for $\zeta_{i}, \xi_{j}\in \Gamma(X,T^{(1,0)}X)$ and $f\in\Gamma(X,\mathcal O_{X})$. 

We restrict our attention to formal power series in a set of parameters~$t$, i.\,e.~$\gamma = \sum_{i \geqslant 1} t^i \gamma^{(i)}$. This assumption allows to solve (\ref{MCTpoly}) perturbatively. At first order the equation reads
\begin{equation}
[-W, \gamma^{(1)}]_{\text{SN}} = 0 \, ,
\label{MCTpoly1}
\end{equation}
and hence we have, up to gauge transformations, $\gamma^{(1)} \in \Jac(W)$, the on-shell bulk space.  One simplicity of affine Landau-Ginzburg models lies in the fact that the solutions of (\ref{MCTpoly1}) are automatically solutions of the full Maurer-Cartan equation, as the Schouten-Nijenhuis bracket of two functions vanishes. Having thus fully solved the deformation problem in the bulk sector, the problem of computing bulk-induced deformations reduces to that of transporting bulk deformations appropriately to the boundary sector. The first main result in accomplishing this task is the following, which at its core is a weak version of deformation quantisation. 
\begin{theorem}\label{Th1}
$(\mathrm{Coder}({T_{A}}), [\partial, \,\cdot\,], [\,\cdot\,,\,\cdot\,])$ and $(T_{\text{poly}}, [-W,\,\cdot\,]_{\text{SN}}, [\,\cdot\,,\,\cdot\,]_{\text{SN}})$ are quasi-isomorphic as $L_{\infty}$-algebras.
\end{theorem}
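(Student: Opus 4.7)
The plan is to follow the two-stage strategy already hinted at in the introduction: first trade the differential graded algebra $(A,\partial)$ for a curved version $(A,\partial')$ in which the role of $\partial_1=[D,\cdot]$ is absorbed into a curvature, and then extend Kontsevich's formality theorem from the commutative case to this curved, matrix-valued setting. The composition will then give the required $L_{\infty}$-quasi-isomorphism, and $\mathrm{HH}^{\bullet}(A,\partial)\cong \mathrm{Jac}(W)$ will emerge as a byproduct on the level of tangent cohomology.

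For the first stage I would define $\partial'\in\mathrm{Coder}({T_{A}})$ by keeping the matrix-multiplication term $r'_2=r_2$, setting $r'_1=0$, and switching on a curvature $r'_0=D^2=W\cdot e_D$. Centrality of $W\cdot e_D$ makes $(\partial')^2=0$, so $(A,\partial')$ is genuinely a curved DG algebra. The difference $\partial-\partial'$ is the specific coderivation corresponding to $[D,\cdot]-W\cdot e_D$, which I would use to exhibit an explicit $L_{\infty}$-isomorphism
\[
\Phi:\bigl(\mathrm{Coder}({T_{A}}),[\partial,\cdot],[\cdot,\cdot]\bigr)\stackrel{\cong}{\longrightarrow}\bigl(\mathrm{Coder}({T_{A}}),[\partial',\cdot],[\cdot,\cdot]\bigr)
\]
as an "inner" gauge-type automorphism (or, equivalently, as a twist by the Maurer--Cartan element $\partial-\partial'$ of the trivial codifferential). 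Intertwining is a direct consequence of $\partial^{2}=(\partial')^{2}=0$ and the graded Jacobi identity for $[\cdot,\cdot]$.

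For the second stage I would invoke Kontsevich's formality theorem, which delivers an $L_{\infty}$-quasi-isomorphism $U$ from $(T_{\text{poly}},0,[\cdot,\cdot]_{\text{SN}})$ to the Hochschild cochain complex of the commutative polynomial algebra $R=\C[x_{1},\ldots,x_{N}]$, regarded as $\mathrm{Coder}({T_{R}})$ with vanishing codifferential. A super-Morita step (matrix trace and inclusion) extends $U$ to an $L_{\infty}$-morphism $\widetilde U:T_{\text{poly}}\to \mathrm{Coder}({T_{A}})$, again with no curvature or potential on either side, that remains a quasi-isomorphism thanks to the Hochschild--Kostant--Rosenberg theorem. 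Now I would treat $-W\in T_{\text{poly}}$ as a Maurer--Cartan element of the source (which it is trivially, since Schouten brackets of functions vanish) and apply Proposition~\ref{prop:MCsolutions} to push it forward. The twisted source is exactly $(T_{\text{poly}},[-W,\cdot]_{\text{SN}},[\cdot,\cdot]_{\text{SN}})$, and a calculation using the explicit Kontsevich formulas identifies the twisted target with $(\mathrm{Coder}({T_{A}}),[\partial',\cdot],[\cdot,\cdot])$: the image of $-W$ under $\widetilde U$ is precisely the curved piece $r'_0=W\cdot e_D$. Composing with $\Phi^{-1}$ completes the map to $(\mathrm{Coder}({T_{A}}),[\partial,\cdot],[\cdot,\cdot])$.

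The main obstacle is the interaction of Kontsevich's formality with the non-commutative, $\Z_{2}$-graded matrix structure of $A=\mathrm{End}_{\mathrm{DG}(W)}(D)$. The Morita extension from $R$ to $M_{n|n}(R)$ must be carried out at the level of $L_{\infty}$-morphisms, not merely of complexes, and must commute with the twisting procedure so that MC solutions are transported correctly. Once this is in place, invariance of the quasi-isomorphism under twisting (the second half of Proposition~\ref{prop:MCsolutions}) finishes the argument, and the quasi-isomorphism property can be verified explicitly on first-order deformations, where the cohomology of $(T_{\text{poly}},[-W,\cdot]_{\text{SN}})$ is the Jacobi ring $\mathrm{Jac}(W)$ and the cohomology on the other side is $\mathrm{HH}^{\bullet}(A,\partial)$.
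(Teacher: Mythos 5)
Your architecture coincides with the paper's: pass to the curved algebra with curvature proportional to $W\cdot e$ by conjugating with the weak coalgebra isomorphism whose zeroth Taylor coefficient is $\mp D$ (your ``inner gauge-type automorphism'' is precisely the tadpole-cancellation map $T$ of lemma~\ref{Tmaplemma}), reduce from $A$ to $R$ by a super-Morita/cotrace step, and extend Kontsevich's formality morphism to account for the potential. The identification of the image of $-W$ with the curvature coderivation is also what the paper establishes through its graph-by-graph analysis in proposition~\ref{wDQ}. So far, so good.

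The genuine gap is your closing claim that ``invariance of the quasi-isomorphism under twisting (the second half of Proposition~\ref{prop:MCsolutions}) finishes the argument.'' Proposition~\ref{prop:MCsolutions} states that an $L_{\infty}$-quasi-isomorphism induces a bijection on Maurer-Cartan moduli; it does \emph{not} state that the morphism obtained by twisting with a Maurer-Cartan element is again a quasi-isomorphism. That implication holds for nilpotent or formal MC elements (order $\hbar$, where a filtration argument applies), but $-W$ is an honest polynomial and twisting by it changes both cohomologies drastically (from $T_{\text{poly}}$ to $\Jac(W)$ on the source). Worse, whether the twisted morphism is a quasi-isomorphism depends on which Hochschild complex one takes: for the complex of the first kind ($\prod_n$) the map from $T_{\text{poly}}$ generically fails to be a quasi-isomorphism, which is exactly the point of remark~\ref{HHremark} and of footnote~7. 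A formal twisting argument is blind to this distinction, so it cannot by itself deliver the conclusion. The paper instead proves quasi-isomorphy by hand: it filters $\mathrm{Coder}({T_{R}})$ by a combination of tensor and internal degree, runs the spectral sequence whose first page is $[\widehat{\partial}_2,\,\cdot\,]$-cohomology (Hochschild-Kostant-Rosenberg), and checks degeneration at the second page onto $\Jac(W)$ (appendix~\ref{appss}); convergence is precisely what singles out the second-kind complex $\bigoplus_n$. Your suggestion to ``verify on first-order deformations'' presupposes $\HH^\bullet(A,\partial)\cong\Jac(W)$, which is the statement to be proven. The same caveat applies to your Morita step: HKR compares $R$ with $T_{\text{poly}}$, but the claim that the cotrace is a quasi-isomorphism in the curved ($W\neq 0$) setting is not classical Morita invariance and is also settled in the paper only via the spectral sequence.
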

To understand this result a crucial role is played by the curved associative algebra
$$
(A, \partial_0 + \partial_2) \quad \text{with}\quad \partial_0^1 = -W \cdot e \, ,
$$
where~$e$ denotes the identity matrix of the same size as~$D$. In the following we will write $\partial$ as $\partial_1 + \partial_2$ to distinguish it from $\partial_0 + \partial_2$. We proceed to give a constructive proof of the above theorem. For the purpose of clarity we shall split it into the following three parts. 

\begin{lemma}\label{Tmaplemma}
There is an $L_{\infty}$-quasi-isomorphism
$$
(\mathrm{Coder}({T_{A}}), [\partial_1 + \partial_2,\,\cdot\,], [\,\cdot\,,\,\cdot\,]) \longrightarrow (\mathrm{Coder}({T_{A}}), [\partial_0 + \partial_2,\,\cdot\,], [\,\cdot\,,\,\cdot\,]) \, . 
$$
\end{lemma}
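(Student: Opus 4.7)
The plan rests on observing that the matrix factorisation condition $D^2 = W \cdot e$ is precisely the Maurer-Cartan equation for $D$, viewed as a degree-zero element of $A[1]$, in the curved associative algebra $(A, \partial_0 + \partial_2)$: the only non-vanishing contributions to $\sum_{n\geqslant 0} r_n(D^{\otimes n})$ are the curvature $r_0 = -W \cdot e$ and the product term $r_2(D,D) = D \cdot D = W \cdot e$, and these cancel. Once this is in place, the standard twisting formalism for curved $A_{\infty}$-structures by a Maurer-Cartan element produces the comparison.

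Concretely, I would invoke the fact that twisting $(A, \partial_0 + \partial_2)$ by $D$ is implemented by an explicit coalgebra automorphism $\Phi_D \in \operatorname{Aut}({T_{A}})$, namely the exponential of the length-increasing coderivation that inserts a single $D$; equivalently, $\Phi_D$ freely inserts copies of $D$ and $\Phi_D^{-1}$ inserts copies of $-D$. A direct check on the $n = 0, 1, 2$ components of the twisted structure shows that it kills the curvature, installs the differential $[D,\,\cdot\,] = \partial_1^1$, and leaves $\partial_2$ unchanged; the higher twisted products $r_n^D$ for $n \geqslant 3$ vanish automatically because $r_n = 0$ in the same range. This yields the identity
$$
\Phi_D \circ (\partial_0 + \partial_2) \circ \Phi_D^{-1} = \partial_1 + \partial_2 .
$$

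Conjugation by $\Phi_D$ now defines the linear endomorphism
$$
\Psi \colon \Coder({T_{A}}) \longrightarrow \Coder({T_{A}}), \qquad X \longmapsto \Phi_D \circ X \circ \Phi_D^{-1},
$$
which is a vector-space isomorphism (inverted by conjugation with $\Phi_D^{-1}$), automatically preserves the graded commutator bracket, and by the identity above intertwines the two differentials $[\partial_0 + \partial_2,\,\cdot\,]$ and $[\partial_1 + \partial_2,\,\cdot\,]$. Thus $\Psi$ is a strict isomorphism of DG Lie algebras and hence, in particular, an $L_{\infty}$-quasi-isomorphism as claimed.

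The main point requiring care is to check that $\Psi$ restricts to a well-defined map on the Hochschild complex of the second kind (cf.\ remark~\ref{HHremark}): for $X$ with finitely many non-vanishing components $X^1_m$, the conjugate $\Psi(X)$ must again have finitely many. Tracing the coalgebra twist, the component $\Psi(X)^1_k$ receives a contribution only from those $X^1_m$ with $m \geqslant k$, namely by inserting exactly $m - k$ copies of $-D$ among the $k$ inputs; this is a finite sum, and if $X^1_m$ vanishes for $m$ above some threshold then so does $\Psi(X)^1_k$ for $k$ above the same threshold. This bookkeeping is what distinguishes the construction from the classical twist for (non-curved) DG algebras, but it is routine and presents no conceptual obstacle.
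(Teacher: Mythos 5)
Your construction is essentially the paper's own: the coalgebra automorphism that freely inserts copies of $\pm D$ is exactly the tadpole-cancellation morphism~$T$ used in the paper (defined by $T_0^1=-D$, $T_1^1=\id_{A[1]}$), and in both cases the $L_{\infty}$-morphism is conjugation by it; your observation that $D^2=W\cdot e$ is the Maurer-Cartan equation for~$D$ in $(A,\partial_0+\partial_2)$ is a nice conceptual repackaging of the same computation. Two points of comparison are worth recording. First, for the quasi-isomorphism property the paper appeals to the invariance of Hochschild cohomology of the second kind under curved DG isomorphisms~\cite{pp1010.0982}, whereas you argue directly that conjugation is a \emph{strict} isomorphism of DG Lie algebras; your accompanying check that conjugation preserves the second-kind finiteness condition, via $\Psi(X)_k^1=\sum_{m\geqslant k}X_m^1(\Phi_D^{-1})_k^m$ with only finitely many nonzero terms, is a genuine and welcome addition, since this is precisely the point at which the completed tensor coalgebra would otherwise intrude. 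Second, there is a sign-orientation slip: with the paper's conventions ($a\cdot b=(-1)^{\widetilde a}r_2(a\otimes b)$) the correct identity is $\Phi_D^{-1}\circ(\partial_0+\partial_2)\circ\Phi_D=\partial_1+\partial_2$, i.e.\ one must conjugate by the morphism inserting $-D$ (the paper's~$T$); conjugating in the order you wrote yields $-[D,\,\cdot\,]$ rather than $[D,\,\cdot\,]$ on the unary component. This only exchanges which direction of the isomorphism carries $[\partial_0+\partial_2,\,\cdot\,]$ to $[\partial_1+\partial_2,\,\cdot\,]$ and is harmless for the lemma, but the displayed identity as written is false as stated.
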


\begin{proof} 
We define a weak coalgebra isomorphism~$T$ (for off-shell ``tadpole-cancellation'', i.\,e.~$T$ is a weak $A_\infty$-isomorphism to a non-curved $A_\infty$-algebra) via its fundamental maps $T_{0}=-D$ and $T_{1}=\id_{A[1]}$, and we compute
\begin{align*}
[(\partial_1 + \partial_2)\circ T]_0^1 &= [D, -D] + D(-D) = -W\cdot e = [T\circ(\partial_0 + \partial_2)]_0^1\, ,\\
[(\partial_1 + \partial_2)\circ T]_1^1&= [D, \,\cdot\,] - \partial_2^1(D \otimes \id_{A[1]} + \id_{A[1]} \otimes D) = 0 = [T\circ(\partial_0 + \partial_2)]_1^1 \, ,\\
[(\partial_1 + \partial_2)\circ T]_2^1 &= \partial_2^1 =  [T\circ(\partial_0 + \partial_2)]_2^1 \, .
\end{align*}
Hence we have $(\partial_1 + \partial_2)\circ T = T\circ(\partial_0 + \partial_2)$, i.\,e.~$T$ is a weak $A_{\infty}$-isomorphism from $(A, \partial_0 + \partial_2)$ to $(A, \partial_1 + \partial_2)$. The $L_{\infty}$-morphism is then given by the adjoint action of~$T$. That it is a quasi-isomorphism follows from the fact that Hochschild cohomology of the second kind is invariant under curved DG isomorphisms~\cite{pp1010.0982}.
\end{proof}

The next step in the proof of theorem~\ref{Th1} is an $L_{\infty}$-version of Morita equivalence:
\begin{lemma}\label{Morita}
There is an $L_{\infty}$-quasi-isomorphism
$$
(\mathrm{Coder}({T_{A}}), [\partial_0 + \partial_2,\,\cdot\,], [\,\cdot\,,\,\cdot\,]) \longrightarrow 
(\mathrm{Coder}({T_{R}}), [\widehat{\partial}_0 + \widehat{\partial}_2,\,\cdot\,], [\,\cdot\,,\,\cdot\,]) \, ,
$$
where $\widehat{\partial}_0 + \widehat{\partial}_2$ is the codifferential on ${T_{R}}$ induced from $\del_{0}+\del_{2}$.
\end{lemma}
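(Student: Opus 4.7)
The lemma is a curved-algebra, $L_{\infty}$-enhanced version of the classical Morita invariance of Hochschild cohomology. The key structural observation is that $A = \End_{\DGW}(D) \cong \mathrm{Mat}_k(R)$ as $\Z_2$-graded associative algebras (where $k$ is the size of~$D$), and that both curved algebras $(A, \partial_0 + \partial_2)$ and $(R, \widehat\partial_0 + \widehat\partial_2)$ carry the common curvature~$W$, the inclusion $\iota: R \hookrightarrow A$, $r \mapsto r \cdot e$, being a strict morphism of curved associative algebras. Together with the idempotent $e_{11} \in A$ and the standard Morita bimodules $A e_{11}$ and $e_{11} A$ (satisfying $e_{11} A \otimes_A A e_{11} \cong R$ and $A e_{11} \otimes_R e_{11} A \cong A$), this exhibits $A$ and $R$ as Morita equivalent in the curved sense.

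My first step is to construct an explicit strict $L_{\infty}$-morphism $\Phi$ (i.e.\ a morphism of DG Lie algebras), specified on the fundamental level of multilinear maps by the generalised supertrace
\[
(\Phi_1 f)(r_1 \otimes \cdots \otimes r_n) = \str\bigl(f(r_1 \cdot e \otimes \cdots \otimes r_n \cdot e)\bigr)
\]
for $f \in \Hom(A[1]^{\otimes n}, A[1])$ and $r_i \in R$, with the associated coderivation on ${T_{R}}$ recovered via~\eqref{manytomany}. That $\Phi_1$ intertwines the Hochschild differentials $[\partial_0 + \partial_2, \,\cdot\,]$ and $[\widehat\partial_0 + \widehat\partial_2, \,\cdot\,]$ is a direct verification using that $\iota$ preserves the curvature and that $\str$ is $R$-bilinear once $R$ is identified with its central image $R \cdot e \subset A$. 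That $\Phi_1$ respects the graded commutator brackets is a similar computation, exploiting that every argument $r_i \cdot e$ lies in the centre of~$A$, so that the non-commutativity of~$A$ drops out of the composition structure.

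The main obstacle is then to show that $\Phi_1$ is a quasi-isomorphism, not merely a chain map. For this I would invoke the result of Polishchuk-Positselski~\cite{pp1010.0982} that Hochschild cohomology of the second kind is invariant under Morita equivalence of curved associative algebras; the setup above then immediately yields that $\Phi_1$ descends to an isomorphism on cohomology. The delicacy, already flagged in remark~\ref{HHremark}, is that one must work with the second-kind Hochschild complex (infinite sums) throughout, since the curvature produces non-trivial contributions at arbitrarily high tensor degree and the cohomology of the first kind would generically fail to be preserved by $\Phi_1$.
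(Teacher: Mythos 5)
There is a genuine gap in the first step of your argument: the supertrace map $\Phi_1$ is indeed a chain map for the two Hochschild differentials (your verification of that part is fine, and the curvature terms match because $\iota$ preserves $-W$), but it is \emph{not} a morphism of graded Lie algebras, and the failure occurs exactly where you claim "the non-commutativity of~$A$ drops out". In $\Phi_1([f,g])(r_1\otimes\ldots\otimes r_n)=\str\bigl(f(\ldots\otimes g(r_{i+1}e\otimes\ldots\otimes r_{i+j}e)\otimes\ldots)\bigr)$ the inner cochain~$g$ returns a \emph{general} matrix which is then inserted into~$f$, whereas in $[\Phi_1 f,\Phi_1 g]$ the inner insertion is the scalar $\str(g(\ldots))$, i.\,e.~$f$ gets fed $\str(M)\cdot e$ instead of~$M$. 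Centrality of the \emph{arguments} $r_i\cdot e$ does not help; one would need centrality of the \emph{values} of the cochains, which fails. A minimal counterexample already occurs for $R=\C$, $A=M_2(\C)$ with the $1$-cochains $f(a)=E_{11}aE_{11}$ and $g(a)=E_{12}aE_{21}$: here $f\circ g=E_{12}(\,\cdot\,)E_{21}$ and $g\circ f=0$, so $\Phi_1$ applied to the (anti)commutator is $\id_\C$, while $\Phi_1f=\Phi_1g=\id_\C$ gives $[\Phi_1f,\Phi_1g]\in\{0,\,2\,\id_\C\}$ depending on the sign convention --- in either case not equal. Invoking~\cite{pp1010.0982} cannot repair this, because Morita invariance only gives an isomorphism on cohomology, whereas the lemma requires an $L_\infty$-morphism at the cochain level --- that is what is fed into proposition~\ref{prop:MCsolutions} to transport Maurer-Cartan elements to all orders.

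The paper sidesteps this by constructing the map in the \emph{opposite} direction (which is harmless, since an $L_\infty$-quasi-isomorphism admits a homotopy inverse): it uses the cotrace $\mathrm{cotr}:\mathrm{Coder}({T_{R}})\to\mathrm{Coder}({T_{A}})$, which inflates a cochain on~$R$ to a matrix-valued cochain by summing over matrix indices, with a sign matrix~$\sigma$ accounting for the $\Z_2$-grading. For the cotrace the index sums telescope under composition, so $\mathrm{cotr}([\Phi_1,\Phi_2])=[\mathrm{cotr}(\Phi_1),\mathrm{cotr}(\Phi_2)]$ holds strictly, and the chain-map property is then automatic from $\del_0+\del_2=\mathrm{cotr}(\widehat\del_0+\widehat\del_2)$. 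Your trace map could in principle be completed to a genuine (non-strict) $L_\infty$-quasi-isomorphism by adding higher components $\Phi_n^1$, but these would have to be exhibited and nothing in your argument does so. The second half of your proposal --- deducing the quasi-isomorphism property from Morita invariance of Hochschild cohomology of the second kind --- is sound and close in spirit to the paper, which establishes it by a spectral sequence whose first page is classical Morita equivalence in the $\Z_2$-graded setting.
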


\begin{proof}
First we construct the cotrace map $C_1^1 = \mathrm{cotr} : \mathrm{Coder}({T_{R}}) \rightarrow \mathrm{Coder}({T_{A}})$, then we show that the coalgebra morphism~$C$ defined by $C^1 = C_1^1$ is the desired $L_{\infty}$-quasi-isomorphism. The cotrace map is a slightly modified version of the case for ungraded algebras, see e.\,g.~\cite{Loday}: for $\Phi \in \mathrm{Coder}({T_{R}})$ we define $\mathrm{cotr}(\Phi) \in \mathrm{Coder}({T_{A}})$ via $\mathrm{cotr}(\Phi)_0^1 = \Phi_0^1 \cdot e$ and
$$
\K{(}{)}{\mathrm{cotr}(\Phi)_m^1(a_1\otimes\ldots\otimes a_m)}_{kl} = \sum_{i_1,\ldots,i_{m-1}=1}^{2n}\Phi_m^1((\sigma^{m+1} a_1)_{ki_1}\otimes\ldots\otimes (\sigma a_m)_{i_{m-1}l})
$$
where $\sigma$ is the unique matrix that for homogenous elements $a \in A$ satisfies $\sigma a = (-1)^{|a|}a \sigma$, and $2n$ is the size of~$D$. It is then straightforward to show that
\begin{equation}\label{CotrLie}
[\mathrm{cotr}(\Phi_1), \mathrm{cotr}(\Phi_2)] = \mathrm{cotr}([\Phi_1,\Phi_2]) \, ,
\end{equation}
i.\,e.~$\mathrm{cotr}$ is a map of Lie algebras. In order to show that~$C$ is an $L_{\infty}$-morphism it then suffices to check that $\mathrm{cotr}$ is a map of complexes. This however follows immediately from \eqref{CotrLie} by noting that
$$
\partial_0 + \partial_2 = \mathrm{cotr}(\widehat{\partial}_0 + \widehat{\partial}_2) \, .
$$
That $C_1^1$ is indeed a quasi-isomorphism  follows from a spectral sequence argument and will be shown together with the next proposition.
\end{proof}

Now we arrive at the last and hardest step in the proof of theorem~\ref{Th1}.

\begin{proposition}\label{wDQ}
$(\mathrm{Coder}({T_{R}}), [\widehat{\partial}_0 + \widehat{\partial}_2,\,\cdot\,], [\,\cdot\,,\,\cdot\,])$ and $(T_{\text{poly}}, [-W,\,\cdot\,]_{\text{SN}}, [\,\cdot\,,\,\cdot\,]_{\text{SN}})$ are quasi-isomorphic as $L_{\infty}$-algebras.
\end{proposition}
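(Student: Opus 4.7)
The plan is to derive the proposition from Kontsevich's formality theorem by twisting $L_\infty$-morphisms with a Maurer-Cartan element. In the case $W = 0$, Kontsevich's theorem provides an $L_\infty$-quasi-isomorphism
$$\mathcal{U}\colon (T_{\text{poly}}, 0, [\,\cdot\,,\,\cdot\,]_{\text{SN}}) \longrightarrow (\mathrm{Coder}(T_R), [\widehat{\partial}_2, \,\cdot\,], [\,\cdot\,,\,\cdot\,]),$$
and I would obtain the $W \neq 0$ statement by twisting both sides by $-W$.

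First, since the Schouten-Nijenhuis bracket of two functions vanishes, $-W$ is automatically a Maurer-Cartan element of the undeformed source, and twisting that source yields exactly $(T_{\text{poly}}, [-W,\,\cdot\,]_{\text{SN}}, [\,\cdot\,,\,\cdot\,]_{\text{SN}})$. The corresponding Maurer-Cartan element on the Hochschild side is
$$\mu = \sum_{n \geqslant 1} \frac{1}{n!}\, \mathcal{U}_n\bigl((-W)^{\wedge n}\bigr).$$
The key observation is that $\mathcal{U}_n$ shifts total polyvector degree by $2 - 2n$, so for $n \geqslant 2$ its value on a tuple of functions would be a Hochschild cochain of negative arity and hence vanishes. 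One is therefore left with $\mu = \mathcal{U}_1(-W) = -W$, which, viewed as a degree-$1$ codifferential on $T_R$, is precisely $\widehat{\partial}_0$. Adding $[\mu,\,\cdot\,] = [\widehat{\partial}_0, \,\cdot\,]$ to the original differential $[\widehat{\partial}_2,\,\cdot\,]$ produces $[\widehat{\partial}_0 + \widehat{\partial}_2, \,\cdot\,]$, and the brackets are left untouched, so $\mathcal{U}^{-W}$ furnishes an $L_\infty$-morphism between the two DG Lie algebras of the proposition.

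The main obstacle is to show that $\mathcal{U}^{-W}$ remains a quasi-isomorphism: twisting automatically preserves the $L_\infty$-morphism property, but not quasi-isomorphism in general, and this is particularly delicate for Hochschild cohomology of the second kind. I would proceed by introducing a descending filtration on both sides compatible with the twist --- for example by rescaling $W \mapsto \hbar W$ and filtering by powers of $\hbar$, or by the natural weight grading coming from $W$ --- and study the associated spectral sequences. The $E_1$-page carries the undeformed differentials, so Kontsevich's theorem provides an isomorphism there, and the result then follows provided both spectral sequences converge; securing that convergence is the core technical step, and one has to exploit the polynomial nature of $W$ together with completeness of the filtration. The same spectral sequence argument simultaneously settles the quasi-isomorphism claim deferred in the proof of Lemma~\ref{Morita}, whose $E_1$-page reduces to classical Morita invariance of Hochschild cohomology for the matrix algebra $A \cong M_{2n}(R)$.
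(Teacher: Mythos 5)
Your construction of the $L_\infty$-morphism is correct but takes a genuinely different route from the paper's. The paper keeps Kontsevich's morphism $K$ entirely unchanged and proves directly, by manipulating the angle forms in the weight integrals, that $K$ already intertwines $[-W,\,\cdot\,]_{\text{SN}}$ with $[\widehat{\partial}_0,\,\cdot\,]$: in~\eqref{rhs-weak} the field $-W$ is inserted as a \emph{boundary} argument, and summing over its insertion positions decouples one angle integral, yielding $2\pi$ and the contraction of $\gamma_k$ with $\D W$. You instead twist by the Maurer-Cartan element $-W$; your arity count showing $\mathcal{U}_n((-W)^{\wedge n})=0$ for $n\geqslant 2$, hence $\mu=\widehat{\partial}_0$, is correct (note this implicitly uses the $\Z_2$-graded setting, in which a function is a legitimate degree-one element of the suspended complex, and the twist is well defined without completion issues because the same arity constraint kills all but finitely many correction terms). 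The two morphisms share their first Taylor coefficient but need not agree beyond it, since your twist inserts $-W$ as additional \emph{interior} vertices in graphs with $n\geqslant 2$ polyvector inputs; for the proposition this is immaterial. Your route gets the $L_\infty$-morphism property for free from general twisting theory, while the paper's route yields an explicit identity for the unmodified~$K$, which is what is actually used downstream in theorem~\ref{Th1}.

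On the quasi-isomorphism you correctly isolate the crux but do not resolve it, and this is a genuine gap. The paper does not filter by powers of $W$ (or of a formal parameter $\hbar$); it organises $\mathrm{Coder}({T_{R}})$ into a bicomplex using specific linear combinations of tensor degree and $\Z_2$-degree (the staircase arrangement of appendix~\ref{appss}), so that each anti-diagonal is finite, convergence is automatic, the first page is the HKR image, and the second page is $\Jac(W)$ concentrated in isolated positions, forcing degeneration for degree reasons. That choice of filtration is exactly what makes the spectral sequence compute Hochschild cohomology of the \emph{second} kind; an $\hbar$-adic or $W$-weight filtration would require the separate completeness and boundedness argument that you defer rather than supply, and this is precisely the point where the first-kind computation can give a different (even vanishing) answer. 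The gap is fillable by substituting the explicit bicomplex of appendix~\ref{appss} for your unspecified filtration. Your closing remark that the same spectral sequence disposes of the deferred claim in lemma~\ref{Morita}, with classical Morita invariance appearing on the first page, agrees with the paper.
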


Fortunately we can build on Kontsevich's result on deformation quantisation~\cite{k9709040} which says that the above is true for the case $W = 0$. More precisely, Kontsevich's result concerns polyvector fields on $\R^d$, but as we are dealing with affine space, his result extends trivially. 

Before delving into the proof of proposition~\ref{wDQ}, let us briefly recall the aim and method of deformation quantisation. One is interested in quantising a classical system described by a phase space $M$ (a real smooth manifold, for simplicity consider $M = \R^d$) or alternatively rather by its commutative, associative algebra of observables $(C^{\infty}(M, \R), \,\cdot\,)$. \textsl{Quantising} in this context amounts to deforming the multiplication ``$\,\cdot\,$'' to an associative, but not necessarily commutative product ``$\star$'', in order to pass to the algebra of quantum observables $(C^{\infty}(M,\R)[\![\hbar]\!], \star)$ while postponing the task of representing it on a Hilbert space. Below we will drop formal parameters like~$\hbar$ from our notation.

To rephrase the problem in a more compact notation, we denote ``$\,\cdot\,$'' by $\partial_2^1$ so that the deformation problem becomes that of solving the Maurer-Cartan equation of the DG Lie algebra 
$(\mathrm{Coder}(T_{C^{\infty}(M, \R)}), [\partial_2,\,\cdot\,], [\,\cdot\,,\,\cdot\,])$. Kontsevich's solution is to construct an $L_{\infty}$-quasi-isomorphism
$$
K : (\Gamma(M, \bigwedge TM), [\,\cdot\,,\,\cdot\,]_{\text{SN}}) \longrightarrow (\mathrm{Coder}(T_{C^{\infty}(M, \R)}), [\partial_2,\,\cdot\,], [\,\cdot\,,\,\cdot\,]) \, .
$$
Thus by proposition~\ref{prop:MCsolutions} every perturbatively deformed product ``$\star$'' originates from a Poisson structure on~$M$, i.\,e.~a degree $2$ polyvector field $\alpha$ which satisfies the Maurer-Cartan equation $[\alpha, \alpha]_{\text{SN}} = 0$. 

As already observed, we can view our theorem~\ref{Th1} as a generalisation of deformation quantisation: endowed with a non-trivial differential $[-W,\,\cdot\,]$, the DG Lie algebra of polyvector fields now governs deformations of a DG algebra, and not just a commutative associative one. 

The proof of proposition~\ref{wDQ} splits into two parts. First we show that the $L_{\infty}$-morphism~$K$ can be extended to the case $W \neq 0$, then we show that it is still a quasi-isomorphism.

\begin{proof}
We start by recalling Kontsevich's construction. The $L_{\infty}$-morphism~$K$ is given by
$$
(K_n^1(\gamma_1\wedge \ldots \wedge \gamma_n))_m^1 = \sum_{\Gamma \in \mathcal{G}(n,m)} w_{\Gamma} \, U_{\Gamma}
$$
where $\mathcal{G}(n,m)$ denotes the set of certain directed graphs $\Gamma$ to which in turn we will associate certain weights $w_{\Gamma}\in\R$ and multilinear maps $U_{\Gamma}$ on $R[1]$. To describe these, consider the unit disc $D$ in the complex plane. Choose $m$ marked points $q_{\bar 1},\ldots, q_{\bar m}$ (which we associate with functions $f_{\bar 1},\ldots,f_{\bar m}$) on the boundary $\partial D$ and $n$ marked points $p_1,\ldots,p_n$ (which we associate with polyvector fields $\gamma_{1},\ldots,\gamma_{n}$) in the interior. These $m+n$ marked points coincide with the vertices of the graph $\Gamma\in\mathcal{G}(n,m)$. 

The possible edges between vertices are constrained by the following rules: (i) for every polyvector field $\gamma_k$, there are precisely $\widetilde{\gamma}_k$ edges $e_k^1,\ldots,e^{\widetilde{\gamma}_k}_k$ starting at $p_k$ and ending on distinct marked points different from $p_k$, (ii) each marked point on the boundary has zero outgoing edges and at least one incoming edge, (iii) the total number of edges is $\mathrm{dim}(C^{n,m}) = 2n + m - 2 \geqslant 0$, where we denote by $C^{n,m}$ the moduli space of the above described marked points on the unit disc with a choice of orientation. Here we run clockwise around the circle and the orientation is well-defined by omitting the point $\I \in \partial D$. This special point is to be viewed as representing the ``out-state''.

To construct the map $U_{\Gamma}\in\Hom(R[1]^{\otimes m},R[1])$ for fixed polyvector fields $\gamma_{1},\ldots,\gamma_{n}$, one views the edges ending on a vertex as the action of the coordinate vector fields on the function associated to the vertex and then takes the product over all such actions. More precisely, if we write
$$
\gamma_{i} = \gamma^{j_{i,1}\ldots j_{i,\widetilde\gamma_{i}}} \del_{j_{i,1}} \wedge\ldots\wedge \del_{j_{i,\widetilde\gamma_{i}}}
$$
and denote by $\Gamma_{\bullet\rightarrow k}$ the set of edges ending on vertex~$k$, then we have
$$
U_{\Gamma}(f_{1}\otimes\ldots\otimes f_{m}) = 
\sum_{I} \Big[ \prod_{i=1}^n \Big( \prod_{e\in\Gamma_{\bullet\rightarrow i}} \del_{I(e)} \Big) \gamma_{i}^{I(e_{i}^1)\ldots I(e_{i}^{\widetilde\gamma_{i}})} \Big]
\Big[ \prod_{\bar \jmath=\bar 1}^{\bar m} \Big( \prod_{e\in\Gamma_{\bullet\rightarrow \bar\jmath}} \del_{I(e)} \Big) f_{\bar \jmath} \Big]
$$
where the sum is over all maps $I:\Gamma_{1}\rightarrow\{1,\ldots,d\}$. 

The weights $w_{\Gamma}$ are certain integrals over the moduli space $C^{n,m}$. In order to understand these, consider for every edge $e^r_k$, the angle map $\varphi_{e^r_k}:  \overline{D}\times \overline{D} \rightarrow (0, 2\pi]$ measuring the (clockwise) angle between the edge $e_k^r$ and the line connecting $p_k$ to the out-state at~$\I$.\footnote{More precisely the angle is measured with respect to the hyperbolic metric, and the edges are the associated geodesics. This however does not influence our discussion.} If we denote by $\iota : C^{n,m} \rightarrow D^{\times n} \times \partial D^{\times m}$ the canonical embedding of the moduli space, then the weights are given by
\begin{equation}\label{weight}
w_{\Gamma} = \frac{1}{(2\pi)^{2n + m-2}}\int_{\iota(\overline{C}^{n,m})} \bigwedge_{k=1}^n(\D\varphi_{e_1^k}\wedge \ldots \wedge \D\varphi_{e_{\widetilde{\gamma}_k}^k}) \, .
\end{equation}

We are now in a position to start with the proof proper. Let $\widehat{\frak{d}}_1^1 = [\widehat{\partial}_0 + \widehat{\partial}_2,\,\cdot\,]$ and define $\widehat{\frak{d}}_2^1$ via $\widehat{\frak{d}}_2^1(\Phi_1 \wedge \Phi_2) = (-1)^{\widetilde{\Phi}_1}[\Phi_1, \Phi_2]$. We want to show that~$K$ continues to be an $L_{\infty}$-quasi-isomorphism also in the curved case, i.\,e. 
$$
K_n^1l_n^n + K_{n-1}^1l_n^{n-1} = \widehat{\frak{d}}_1^1 K_n^{1} + \widehat{\frak{d}}_{2}^1K_n^2 \, ,
$$
where we denote the DG Lie algebra structure on $T_{\text{poly}}$ by the maps $l_{m}^n$. If we define the coderivation $c$ given by $c^1 = c_1^1 = \{\partial_0, \,\cdot\,\}$, then by Kontsevich's result  the above reduces to
$$
Kl_1 = cK
$$
which in expanded form reads
\begin{align}
&\sum_{k=0}^{n}(-1)^{\sum_{s=1}^{k-1} \widetilde{\gamma}_s}(K_n^1(\gamma_1 \wedge \ldots \wedge [-W, \gamma_k]\wedge\ldots\wedge \gamma_n))_m^1(f_1 \otimes \ldots \otimes f_m) \nonumber \\
&= \sum_{l=0}^m (-1)^{l}(K_n^1(\gamma_1 \wedge\ldots\wedge \gamma_n))_{m+1}^1(f_1 \otimes \ldots \otimes f_l \otimes (-W)\otimes \ldots \otimes f_m) \, . \label{rhs-weak}
\end{align}

\begin{figure}[t]
$$
\begin{tikzpicture}[scale=1.2,baseline=-0.1cm, inner sep=1mm, >=stealth]
\draw[very thick, dotted] (0,0) circle (4);  

\node (NP) at (90:4) [circle,draw=black,fill=white,label=above:\footnotesize{i}] {};

\node (f1) at (72:4) [circle,draw=black,fill=black,label=above:$f_{1}$] {};
\node (f2) at (-40:4) [circle,draw=black,fill=black,label=right:$f_{2}$] {};
\node (f3) at (-125:4) [circle,draw=black,fill=black,label=below:$f_{3}$] {};
\node (f4) at (145:4) [circle,draw=black,fill=black,label=left:$f_{4}$] {};
\node (W) at (20:4) [circle,draw=black,fill=black,label=right:$-W$] {};

\node (n1) at (1.5,2.3) [circle,draw=black,fill= white] {$\footnotesize \gamma_{1}$};
\node (n2) at (2.1,-0.9) [circle,draw=black,fill= white] {$\footnotesize \gamma_{2}$};
\node (n3) at (1.1,-3.0) [circle,draw=black,fill= white] {$\footnotesize \gamma_{3}$};
\node (n4) at (-2.7,-0.9) [circle,draw=black,fill= white] {$\footnotesize \gamma_{4}$};
\node (n5) at (-1.9,1.9) [circle,draw=black,fill= white] {$\footnotesize \gamma_{5}$};

\draw (2.17,-0.35) node {{\footnotesize $\varphi_{e_{2}^1}$}};
\draw[<-] (2.55,-0.2) arc (40:130:5mm); 

\draw[dashed] (n2) to (NP); 

\node (e1) at (-0.3,-1.3) [circle,draw=white,fill= white] {$e_{2}^4$}; 
\node (e2) at (-0.3,-0.5) [circle,draw=white,fill= white] {$e_{4}^2$}; 

\draw[->, very thick] (n1) -- (f1) node[midway,left=0] {$e_{1}^1$}; 

\draw[->, very thick] (n2) -- (W) node[midway,left=1.9] {$e_{2}^1$}; 
\draw[->, very thick] (n2) -- (f2) node[midway,right=0] {$e_{2}^2$}; 
\draw[->, very thick] (n2) -- (n3) node[midway,right=0] {$e_{2}^3$}; 
\draw[->, very thick] (n2) to [out=-175,in=-5] (n4); 

\draw[->, very thick] (n3) -- (f2) node[midway,above] {$e_{3}^1$}; 
\draw[->, very thick] (n3) -- (n4) node[midway,below] {$e_{3}^2$}; 

\draw[->, very thick] (n4) -- (n1) node[midway,above] {$e_{4}^1$}; 
\draw[->, very thick] (n4) to [out=5,in=175] (n2); 
\draw[->, very thick] (n4) -- (f3) node[midway,right] {$e_{4}^3$}; 
\draw[->, very thick] (n4) -- (n5) node[midway,left] {$e_{4}^4$}; 

\draw[->, very thick] (n5) -- (f1) node[midway,above] {$e_{5}^1$}; 
\draw[->, very thick] (n5) -- (f4) node[midway,below] {$e_{5}^2$}; 

\end{tikzpicture} 
$$
\caption{A graph contributing to $K_{5}^1(\gamma_{1}\wedge\ldots\wedge\gamma_{5})_{5}^1(f_{1}\otimes(-W)\otimes f_{2}\otimes f_{3}\otimes f_{4})$.} 
\label{diskpic} 
\end{figure}
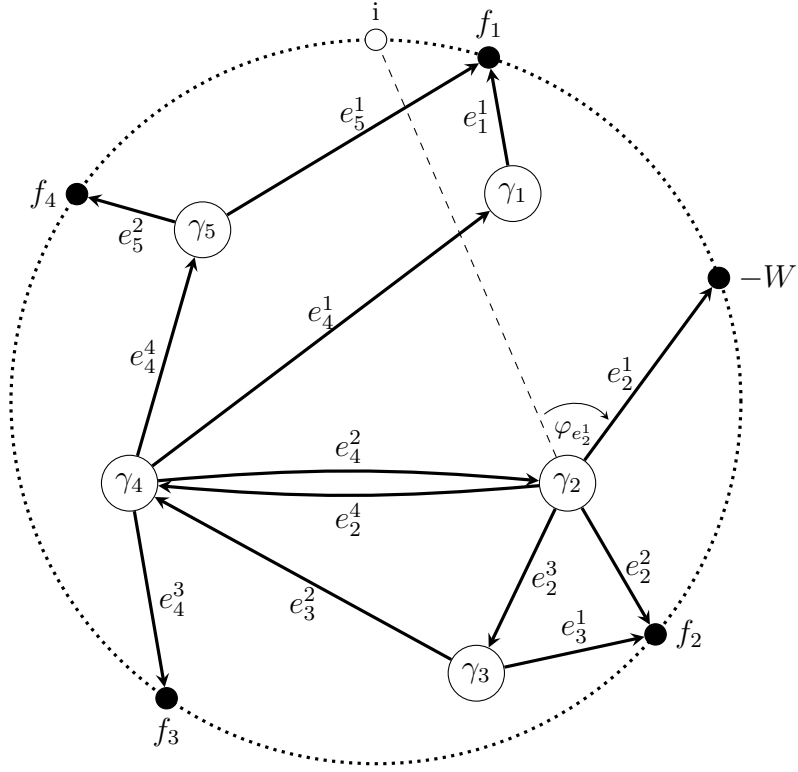

We will now analyse the right-hand side to find that it is the same as the left-hand side; see figure~\ref{diskpic} for a visualisation. Fix a graph $\Gamma \in \mathcal{G}(n, m+1)$ and consider the first summand on the right-hand side of (\ref{rhs-weak}). Pick an edge $e_k^r$ ending on $-W$ and carry the corresponding differential form $\D\phi_{e_k^{r}}$ to the very left of the integral~\eqref{weight}. This results in picking up a sign $\mu=(-1)^{\sum_{s = 1}^{k-1}\widetilde{\gamma_s}}(-1)^{r-1}$. Now consider the $l$-th term in (\ref{rhs-weak}). This differs from the first term by a sign $(-1)^l$ which corresponds to the determinant of the Jacobian of the map transforming $(q_1,\ldots, q_{m+1}) \mapsto (q_2, \ldots, q_l, q_1, \ldots, q_{m+1})$ contributing to the weight. This sign cancels the sign present in the sum, therefore performing the sum over $l$ yields an integral of the angle $\varphi_{e_k^r}$ over $(0,2\pi]$ which decouples from the rest and yields $2\pi$. This is then absorbed by a $2\pi$ in the denominator of the prefactor of (\ref{weight}). We are then left with an integral over $\overline{C}^{n,m}$.  The sign~$\mu$ is the product of the sign present on the left-hand side of the equation times the sign coming from the Schouten-Nijenhuis bracket, and we see that~\eqref{rhs-weak} indeed holds true. 
 
\medskip
 
We will now prove that $K_1^1$ is a quasi-isomorphism. 
Consider $\gamma \in T_{\text{poly}}$ of degree $\widetilde{\gamma} = n$. By construction $(K_1^1(\gamma))^1(f_1 \otimes \ldots \otimes f_m)$ is non-zero only for $m = n$ and is given by
$$
(K_1^1(\gamma))_n^1(f_1 \otimes \ldots \otimes f_n) = \frac{1}{n!}\sum_{\sigma \in S_n} \text{sign}(\sigma)\gamma^{i_1\ldots i_n}\prod_{k = 1}^n \partial_{i_{\sigma(k)}} f_k \, .
$$
By the Hochschild-Kostant-Rosenberg theorem $K_1^1$ is a quasi-isomorphism from $(T_{\text{poly}}, 0)$ to $(\mathrm{Coder}({T_{R}}), [\widehat{\partial}_2, \,\cdot\,])$. 

To show that $K_{1}^1$ is also a quasi-isomorphism in our case $W\neq 0$, the strategy is to view $(\mathrm{Coder}({T_{R}}), [\widehat{\partial}_0 + \widehat{\partial}_2, \,\cdot\,])$ as a bicomplex after choosing appropriate linear combinations of tensor and tilde degrees. We then choose to compute the associated spectral sequence whose first page is $[\widehat{\partial}_2, \,\cdot\,]$-cohomology. The spectral sequence degenerates at the second page yielding
$$
H_{ [\widehat{\partial}_0 + \widehat{\partial}_2, \,\cdot\,]}(\mathrm{Coder}({T_{R}})) = \HH^{\bullet}(R, \widehat{\partial}_0 + \widehat{\partial}_2) = \Jac(W) \, ,
$$
see appendix~\ref{appss} for details. As explained also in~\cite{pp1010.0982,ct1007.2679}, the chosen spectral sequence computes Hochschild cohomology of the second kind. This concludes the proof of proposition~\ref{wDQ}. 

Essentially the same argument is used if we replace $\mathrm{Coder}({T_{R}})$ with $\mathrm{Coder}({T_{A}})$ in the setting of lemma~\ref{Morita}. The first page then is classical Morita equivalence whose proof is the same in the $\mathbb{Z}_2$-graded case. 
\end{proof}

\subsubsection{Constructing the off-shell Kapustin-Li pairing}\label{omegaconstruction}

We close this subsection with a short discussion of another application in which the tadpole-cancellation map~$T$ introduced in the proof of lemma~\ref{Tmaplemma} plays a prominent role. This application concerns the construction of the off-shell or $A_{\infty}$-enrichment of the Kapustin-Li trace and pairing, and it is independent from the remainder of the present paper. To be brief we assume familiarity with basic non-commutative geometry~\cite{Loday, ks0606241} and its relation to $A_{\infty}$-theory; we will use the notation of~\cite{l0507222,c0904.0862}. 

As explained in~\cite{c0904.0862}, one way to construct the Calabi-Yau $A_{\infty}$-structure encoding the open topological string theory on the on-shell space~$H$ is to first find a non-commutative homologically symplectic form~$\omega$ that is the $A_{\infty}$-version of the Kapustin-Li pairing and satisfies the generalised cyclicity condition $L_{Q}\omega=0$, where~$Q$ encodes the DG structure on~$A$. Then in a second step one applies the Darboux theorem and thus pushes~$Q$ forward to the correct cyclic $A_{\infty}$-structure. In the approach of~\cite{c0904.0862} the first step is the computationally much more challenging one, and $\omega$ was constructed only perturbatively by an algorithm that is applied case by case. Now we will describe how to obtain an explicit and general expression for~$\omega$. 

It turns out that on the off-shell or $A_{\infty}$-level there is an interesting subtlety in the relation between the Kapustin-Li trace and pairing that partly arises from the relation between Hochschild and cyclic homology. We recall that the Hochschild chain complex $(\text{C}_{\bullet}(A),b)$ of an $A_{\infty}$-algebra $(A,\del)$ is dual to $(\mathcal C^1(B_{A}), L_{Q})$, where $Q=\del^\vee$ and we agree to indicate the non-vanishing $A_{\infty}$-products as indices in~$Q$ and~$b$. Then it is straightforward to construct the following sequence of maps and show that they are all quasi-isomorphisms: 
\begin{align}\label{AKLtr}
(\mathcal C^1(B_{A}), L_{Q_{1,2}})^* & \longrightarrow (\text{C}_{\bullet}(A),b_{1,2}) \stackrel{T_{*}} {\longrightarrow} (\text{C}_{\bullet}(A),b_{0,2}) \stackrel{\str} {\longrightarrow} (\text{C}_{\bullet}(R),\widehat b_{0,2}) \nonumber \\
 & \stackrel{\text{HKR}} {\longrightarrow}(\Omega^\bullet(\C^N), \D W\wedge(\,\cdot\,))  \stackrel{\text{Res}} {\longrightarrow} \C \, .
\end{align}
In this way we obtain an explicit expression for a 1-form $\theta\in \mathcal C^1(B_{A})$ whose constant part agrees with the Kapustin-Li trace, and by construction it satisfies $L_{Q_{1,2}}\theta=0$. The last four maps in~\eqref{AKLtr} were also independently constructed in~\cite{s0904.1339}. 

To arrive at an expression for the $A_{\infty}$-version of the Kapustin-Li pairing, a natural guess is to find a solution~$\omega'$ of the equation $L_{X\cup Y}\theta=\omega'(X,Y)$ for all non-commutative vector fields $X,Y$. However, one finds that while $L_{Q_{1,2}}\omega'=0$ holds by construction, such an~$\omega'$ is not necessarily homologically symplectic. (Indeed, $\omega'$ is precisely of the form of the special solution of $L_{Q_{1,2}}\omega'=0$ discussed at the end of section~2 in~\cite{c0904.0862}.) 

In order to construct the true off-shell version~$\omega$ of the Kapustin-Li pairing, one does not have to compute the Hochschild complex but rather the cyclic complex $(\text{CC}_{\bullet}(A),b+uB)$ whose differential also features Connes' operator~$B$. So instead of~\eqref{AKLtr} one must consider
\begin{align*}
(\mathcal C^2_{\text{cl}}(B_{A}), L_{Q_{1,2}})^* &\longrightarrow (\mathcal C^0(B_{A}), L_{Q_{1,2}})^* 
\longrightarrow (C^\lambda(A),b_{1,2})
\stackrel{\kappa} {\longrightarrow} (\text{CC}_{\bullet}(A),b_{1,2}+uB) \\
& \stackrel{T_{*}} {\longrightarrow} (\text{CC}_{\bullet}(A),b_{0,2}+uB)
\stackrel{\str} {\longrightarrow} (\text{CC}_{\bullet}(R),\widehat b_{0,2}+uB) \\
& \stackrel{\text{HKR}} {\longrightarrow}(\Omega^\bullet(\C^N), \D W\wedge(\,\cdot\,)+ud)  
\stackrel{\rho} {\longrightarrow} \C 
\end{align*}
where $C^\lambda(A)$ denotes Connes' complex. Again for most of the above maps it is straightforward to check that they are explicitly constructible quasi-isomorphisms. Only the maps~$\kappa$ and~$\rho$ are more interesting: $\kappa$ is a generalisation to the DG level of a result on associative algebras in~\cite{Kassel}, while one way to obtain~$\rho$ is to apply the method of homological perturbation also discussed in the next subsection. 

In the end one arrives at a homologically symplectic form $\omega\in \mathcal C^2(B_{A})$ whose constant part is the Kapustin-Li pairing. The details of this construction will appear in future work. We note that this explicit expression for~$\omega$ together with the results of the present paper allow for a significantly more efficient and general computational method to determine the effective superpotential $\mathcal W_{\text{eff}}(u,t)$ to arbitrary orders also in the bulk moduli from first principles.

\subsection{Transporting bulk deformations on-shell}\label{transbulkonshell}

After having found the solutions to the Maurer-Cartan equation describing bulk-induced deformations of the off-shell open string algebra, we are now faced with the task of transporting them on-shell. We shall do so by constructing an $L_{\infty}$-morphism
\be\label{LLmorph}
(\mathrm{Coder}({T_{A}}), [\partial, \,\cdot\,], [\,\cdot\,, \,\cdot\,]) \longrightarrow 
(\mathrm{Coder}({T_{H}}), [\widetilde{\partial}, \,\cdot\,], [\,\cdot\,, \,\cdot\,]) \, . 
\ee

A crucial observation is that $({T_{H}},\widetilde\del)$ is a deformation retract of $({T_{A}},\del)$. We will start with a general discussion of this notion on the level of complexes and of the natural $L_{\infty}$-morphism that it gives rise to. Then we will specialise to our case of open string algebras, explicitly construct the associated deformation retract data, and thus arrive at the $L_{\infty}$-morphism~\eqref{LLmorph} to transport the off-shell deformation~$\delta$ from the previous subsection to the on-shell algebra $(H,\widetilde\del)$. While we will apply it to the case of Landau-Ginzburg models, we note that our construction of the map~\eqref{LLmorph} works for arbitrary $A_{\infty}$-algebras $(A,\del)$ and their minimal models $(H,\widetilde\del)$.

\subsubsection{Deformation retractions}

A \textsl{deformation retraction} 
\begin{equation}
\xymatrix{%
(C_2, d_2) \ar@<+.5ex>[rr]^-{i}  && (C_1, d_1) \ar@<+.5ex>[ll]^-{p} \\
}%
\!\!\!\xymatrix{%
{}\ar@(ur,dr)[]^{h}
}%
\label{HDRdiagram}
\end{equation}
consists of the following data: two complexes $(C_1, d_1)$ and $(C_2, d_2)$, two maps of complexes $i : (C_2, d_2) \rightarrow (C_1, d_1)$ and $p: (C_1, d_1) \rightarrow (C_2,d_2)$, and a homotopy $h \in \mathrm{End}(C_1)$. These data are subject to the relations
\begin{align}\label{HDR}
p  i &= \mathrm{id}_{C_2} \, , \quad 
\mathrm{id}_{C_1} - i  p = d_1  h + h  d_1 \, ,
\end{align}
and we refer to $(C_{2}, d_{2})$ as the deformation retract of $(C_{1},d_{1})$. The homotopy~$h$ is said to be in \textsl{standard form} if it satisfies
$$
hh=hi=ph=0 \, .
$$ 

Given the data (\ref{HDRdiagram}), we can construct maps $M_n^1 : \mathrm{End}(C_1)[1]^{\wedge n} \rightarrow \mathrm{End}(C_2)[1]$ via
$$
M_n^1(a_1\wedge \ldots \wedge a_n) = \sum_{\sigma \in S_n} \varepsilon_{\sigma; a_1,\ldots,a_n} p a_{\sigma(1)}\, h\, a_{\sigma(2)} \ldots h a_{\sigma(n)} i
$$
for all homogeneous $a_1 ,\ldots, a_n \in \mathrm{End}(C_1)$, where $\varepsilon_{\sigma; a_1,\ldots,a_n}$ denotes the Koszul sign introduced in subsection~\ref{subsec:AinfLinf}. We will denote by~$M$ the coalgebra morphism $S_{\mathrm{End}(C_1)}\rightarrow S_{\mathrm{End}(C_2)}$ uniquely defined from the maps $M_{n}^1$. This morphism is the central ingredient of our version of the homological perturbation lemma: 

\begin{proposition}\label{HPL}
$M: (\mathrm{End}(C_1), [d_1, \,\cdot\,], [\,\cdot\,, \,\cdot\,]) \rightarrow (\mathrm{End}(C_2), [d_2, \,\cdot\,], [\,\cdot\,, \,\cdot\,])$ is an $L_{\infty}$-quasi-isomorphism. 
\end{proposition}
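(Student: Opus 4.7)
The plan is to split the proof into two parts: first, to show that $M$ is an $L_{\infty}$-morphism; second, to show that its linear component $M_1^1$ is a quasi-isomorphism. Since $M$ is by construction the unique coalgebra morphism extending the corestrictions $M_n^1$, the first task reduces to verifying the single intertwining relation $\mathfrak{d}_2 \circ M = M \circ \mathfrak{d}_1$, where $\mathfrak{d}_k$ denotes the codifferential on $S_{\mathrm{End}(C_k)[1]}$ encoding the DG Lie algebra structure $([d_k,\,\cdot\,], [\,\cdot\,,\,\cdot\,])$. Projecting onto the first cogenerator, this becomes a single explicit identity for each $n\geqslant 1$, whose left-hand side collects $[d_2, M_n^1(a_1\wedge\ldots\wedge a_n)]$ together with quadratic commutators $[M_k^1, M_{n-k}^1]$, and whose right-hand side collects the $M_n^1$-terms in which some $a_i$ is replaced by $[d_1,a_i]$ and the $M_{n-1}^1$-terms in which two arguments are replaced by their bracket $[a_i,a_j]$.

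I would establish this identity by applying $[d_2,\,\cdot\,]$ directly to the defining word $p\, a_{\sigma(1)}\, h\, a_{\sigma(2)}\, h\cdots h\, a_{\sigma(n)}\, i$ and exploiting the deformation retract identities $d_2 p = p d_1$, $d_1 i = i d_2$, $pi=\mathrm{id}_{C_2}$, and in particular $[d_1,h]=\mathrm{id}_{C_1}-ip$. The outer $d_2$'s are first transported through $p$ and $i$ to become $d_1$'s, which are then commuted through the word via graded Leibniz. Each pass of $d_1$ over a letter $a_i$ produces a $[d_1,a_i]$ contribution that, once the sum over $\sigma\in S_n$ is regrouped into a sum over shuffles, reassembles into the expected $M_n^1(\ldots\wedge[d_1,a_i]\wedge\ldots)$ term. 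Each pass of $d_1$ over a letter $h$ splits, via $[d_1,h]=\mathrm{id}-ip$, into two residues: the $\mathrm{id}$ residue deletes that $h$ and leaves two adjacent $a$-letters which, upon combining with the transposed permutation in the symmetric sum, produce the graded commutator $[a_i,a_j]$ and so yield the $M_{n-1}^1$-terms on the right; the $-ip$ residue cuts the word into two subwords of the form of smaller $M_k^1$-operands and, after summation over all positions and shuffles, reconstitutes precisely the quadratic bracket terms $[M_k^1(\ldots), M_{n-k}^1(\ldots)]$ needed on the left. This is a symmetric-algebra avatar of the tensor-trick argument familiar from~\cite{GuLaSt,c0403266}.

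For the quasi-isomorphism claim, $M_1^1(a)=pai$ is a chain map from $(\mathrm{End}(C_1),[d_1,\,\cdot\,])$ to $(\mathrm{End}(C_2),[d_2,\,\cdot\,])$, and I would exhibit $\iota\colon b\mapsto ibp$ as a homotopy inverse. The relation $pi=\mathrm{id}_{C_2}$ immediately gives $M_1^1\circ\iota=\mathrm{id}_{\mathrm{End}(C_2)}$, while $(\iota\circ M_1^1)(a)-a=(ip)a(ip)-a$ is exhibited as a $[d_1,\,\cdot\,]$-boundary by twice substituting $ip=\mathrm{id}-[d_1,h]$ and invoking graded Leibniz, which produces an explicit primitive built from $h$ and $a$. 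The main obstacle of the proof is the combinatorial bookkeeping in the intertwining identity: verifying that every Koszul and shuffle sign conspires so that the $-ip$ residues really do reproduce the target quadratic brackets and that no spurious terms survive symmetrisation. Beyond this careful sign accounting, the argument is dictated entirely by the deformation retract identities and proceeds along the same lines as the classical homological perturbation lemma.
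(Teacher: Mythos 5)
Your proposal is correct and rests on exactly the same mechanism as the paper's proof: transport $d_2$ through $p$ and $i$, push $d_1$ through the word $p\,a\,h\,a\cdots h\,a\,i$ by graded Leibniz, and let $[d_1,h]=\mathrm{id}_{C_1}-ip$ split each $h$-residue into a ``gluing'' term (producing the bracket of two arguments) and a ``cutting'' term (producing the quadratic $[M_k^1,M_{n-k}^1]$ contributions). The one organizational difference is that you verify the $L_\infty$-identity directly on the symmetrised maps $M_n^1$, whereas the paper first introduces the unsymmetrised maps $S_n^1(a_1\otimes\ldots\otimes a_n)=p\,a_1 h\,a_2\cdots h\,a_n\,i$, proves that $S$ is an $A_\infty$-morphism between the associative structures $(b_1,b_2)$ and $(\widetilde b_1,\widetilde b_2)$, and then obtains $M$ as the induced $L_\infty$-morphism on commutator algebras. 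The paper's detour buys precisely what you flag as the main obstacle of your route: the $A_\infty$-identity involves no shuffles or Koszul symmetrisation signs, so the entire sign bookkeeping is delegated to the standard and sign-safe passage from an $A_\infty$-morphism to the induced $L_\infty$-morphism of commutator DG Lie algebras; if you carry out the direct symmetrised computation you must check those signs by hand. For the quasi-isomorphism statement your argument is a mild strengthening of the paper's: you exhibit $b\mapsto ibp$ as an explicit homotopy inverse (which also makes surjectivity on cohomology manifest), while the paper shows injectivity by deducing from $pai=0$ that $a=b_1^1([h,a]-ha[d_1,h])$; both reduce to the same substitution $ip=\mathrm{id}_{C_1}-[d_1,h]$.
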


\begin{proof}
It is convenient to define also the collection of maps $S_n^1: \mathrm{End}(C_1)[1]^{\otimes n} \rightarrow \mathrm{End}(C_2)[1]$ by 
$$
S_n^1(a_1 \otimes \ldots \otimes a_n) = p a_{1} h a_{2} \ldots h a_{n} i \, .
$$
In fact we are going to prove that the corresponding coalgebra morphism $S$ is an $A_{\infty}$-quasi-isomorphism and hence~$M$ will be the $L_{\infty}$-morphism induced by $S$ on the commutator algebra. 

First we prove that $S$ is an $A_{\infty}$-morphism, i.\,e.
\begin{equation}\label{Smorph}
S_n^1b_n^n + S_{n-1}^1b_n^{n-1} = \widetilde{b}_1^1S_n^1 + \widetilde{b}_2^1 S_n^2
\end{equation}
where $b_1^1$ and $b_2^1$ are defined by $b_1^1(a) = [d_1, a]$ and $b_2^1(a_1 \otimes a_2) = (-1)^{\widetilde{a}_{1}}a_1a_2$ and similarly $\widetilde{b}_1^1$ and $\widetilde{b}_2^1$. For $n=1$ the condition~\eqref{Smorph} is easily checked, 
$$
\widetilde{b}_1^1S_1^1(a) = [d_2, p a i] = p [d_1, a]i = S_1^1b_1^1(a)
$$
where we have only used that $i$ and $p$ are maps of complexes. 

To prove~\eqref{Smorph} for all $n\geqslant 1$ we first compute
\begin{align*}
&\widetilde{b}_1^1S_n^1(a_1 \otimes\ldots\otimes a_n) = p [d_1, a_1h a_{2} \ldots h a_n]i\\
&= p \K{(}{)}{\sum_{k=1}^{n-1} (-1)^{\sum_{l=1}^{k-1}\widetilde{a}_i} a_1h\ldots [d_1, a_k  h]\ldots h a_n+ (-1)^{\sum_{l=1}^{n-1}\widetilde{a}_{i}}a_1 h a_{2} \ldots h [d_1, a_n] } i\\
&= S_n^1 b_n^n(a_1 \otimes\ldots \otimes a_n) - p \K{(}{)}{\sum_{k=1}^{n-1} (-1)^{\sum_{l=1}^{k}\widetilde{a}_{i}} a_1 h\ldots a_k [d_1, h] a_{k+1} \ldots h a_n} i
\end{align*}
where we have only used that $S_1^1$ is a map of complexes. Now we only have to insert the identity $[d_1, h] = \id_{C_{1}} - ip$ in the above equation to obtain the desired result. 

In order to conclude the proof we still need to show that $M_1^1$ is a quasi-isomorphism. We already know that $M_1^1$ is a map of complexes and since $(C_2, d_2)$ is a deformation retract of $(C_1, d_1)$, we are left to verify that if $a \in \mathrm{End}(C_{1})$ represents a non-trivial element in $b_1^1$-cohomology, then $p a i \neq 0$. Suppose however $p a i = 0$, then
\be\label{cohomologyEnd}
0 = i p a i p = (\id_{C_{1}} - [d_1, h])a(\id_{C_{1}} - [d_1, h])
\ee
and hence $a = b_1^1([h,a] - ha[d_1, h])$. This contradicts the assumption on~$a$.
\end{proof}

\subsubsection[Deformation retractions from $A_{\infty}$-algebras]{Deformation retractions from $\boldsymbol{A_{\infty}}$-algebras}\label{DRforcoalgebras}

The $L_{\infty}$-morphism~$M$ allows us to transport deformations of $(C_{1},d_{1})$ to deformations of $(C_{2},d_{2})$. In our case of interest these two complexes are given by $({T_{A}},\del)$ and $({T_{H}},\widetilde\del)$, respectively, and we ask for the additional property that the deformation of~$\widetilde\del$ must continue to be an $A_{\infty}$-structure. Hence we will now explain under which circumstances this is guaranteed to be the case, i.\,e.~when~\eqref{LLmorph} maps coderivations to coderivations. 

\begin{proposition}\label{HPLCoder}
Assume that
\begin{equation}\label{HDRdiagram-Ainf}
\xymatrix{%
({T_{A_{2}}}, \partial_2) \ar@<+.5ex>[rr]^-{i}  && ({T_{A_{1}}}, \partial_1) \ar@<+.5ex>[ll]^-{p} \\
}%
\!\!\!\xymatrix{%
{}\ar@(ur,dr)[]^{h}
}%
\end{equation}
is a deformation retraction where $(A_{1},\del_{1})$ and $(A_{2},\del_{2})$ are $A_{\infty}$-algebras, and~$h$ is in standard form. Then for $a_1, \ldots, a_n \in \mathrm{Coder}({T_{A_{1}}})$, we have $M_n^1(a_1 \wedge\ldots\wedge a_n) \in \mathrm{Coder}({T_{A_{2}}})$ for all $n\geqslant 1$.
\end{proposition}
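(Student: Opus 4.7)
The goal is to show that $D := M_n^1(a_1 \wedge \ldots \wedge a_n)$ satisfies the co-Leibniz rule~\eqref{coLeibniz} on ${T_{A_{2}}}$. My plan is to apply $\Delta$ to the defining formula
\begin{equation*}
D = \sum_{\sigma \in S_n} \varepsilon_{\sigma; a_1, \ldots, a_n} \, p \, a_{\sigma(1)} \, h \, a_{\sigma(2)} \, h \cdots h \, a_{\sigma(n)} \, i
\end{equation*}
and commute it past each factor from right to left. The three key co-Leibniz-type identities I would invoke are: (i)~$\Delta \circ i = (i \otimes i) \circ \Delta$ and $\Delta \circ p = (p \otimes p) \circ \Delta$, which hold because in the situation of interest $i$ and $p$ are coalgebra morphisms (as is the case in the application to the minimal model of proposition~\ref{minmodthm}); (ii)~the coderivation condition~\eqref{coLeibniz} for each $a_{\sigma(k)}$; and (iii)~a co-Leibniz-type relation of the form $\Delta \circ h = (h \otimes \id_{{T_{A_{1}}}} + ip \otimes h) \circ \Delta$ (up to signs), satisfied by the homotopy $h$ obtained from the tensor-trick lift of the underlying vector-space homotopy on~$A_1$.

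The case $n=1$ already illustrates the mechanism:
$$\Delta \circ (p a i) = (p \otimes p)(a \otimes \id + \id \otimes a)(i \otimes i) \Delta = (pai \otimes \id + \id \otimes pai) \circ \Delta \, ,$$
since $pi = \id_{{T_{A_{2}}}}$ and $a \in \Coder({T_{A_{1}}})$. For general~$n$, pushing~$\Delta$ all the way to the left produces a sum of terms that naturally splits into two groups. The first consists of terms in which the coderivations are distributed between the two tensor factors according to a shuffle of~$\sigma$; after summation over $\sigma \in S_n$ with the Koszul signs $\varepsilon_{\sigma;a_1,\ldots,a_n}$, these collapse into $(D \otimes \id + \id \otimes D) \circ \Delta$, because the shuffle pattern generated by~$\Delta$ is precisely the one appearing in the coproduct on~${S_{V}}$ (from subsection~\ref{subsec:AinfLinf}) that controls the symmetrisation built into the definition of~$M_n^1$. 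The second group consists of cross terms in which two homotopies appear in succession ($hh$), or $h$ is composed immediately with~$i$ on the right ($hi$), or immediately with~$p$ on the left ($ph$); these all vanish by the standard-form conditions.

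The hard part will be the combinatorial bookkeeping of Koszul signs and the precise matching of the shuffle pattern produced by~$\Delta$ with the symmetrisation in the definition of~$M_n^1$. The standard-form assumption is exactly what eliminates the cross terms that would otherwise spoil the coderivation property; it is therefore not a technical convenience but a load-bearing hypothesis. Once verified, the proposition together with proposition~\ref{HPL} guarantees that the bulk-induced deformations transported on-shell by~$M$ actually encode $A_{\infty}$-structures $(H, \widetilde{\partial} + \widetilde{\delta})$, as required for the construction to be consistent.
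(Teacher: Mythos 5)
Your overall mechanism --- push $\Delta$ from right to left through the word $p\,a_{\sigma(1)}h\cdots h\,a_{\sigma(n)}i$, use a co-Leibniz-type rule for each factor, and let the standard-form relations $hh=hi=ph=0$ kill the cross terms --- is exactly the mechanism of the paper's proof, which merely packages the bookkeeping into the ideals $I_L=\mathrm{Ker}(R_{i})\cap\mathrm{Ker}(R_{h})$, $I_R=\mathrm{Ker}(L_{p})\cap\mathrm{Ker}(L_{h})$ and establishes the decomposition~\eqref{decomposition} of $\Delta\Lambda_n$ by induction. The genuine gap is your ingredient (iii). The identity $\Delta\circ h=(h\otimes\id+ip\otimes h)\circ\Delta$ is neither a hypothesis of the proposition nor a property of the homotopy that is actually used: the homotopy $U$ constructed in proposition~\ref{HDR} obeys the \emph{symmetrised} rule~\eqref{DeltaUdef}, $\Delta U=\tfrac{1}{2}\left(U\otimes(\id+F\bar F)+(\id+F\bar F)\otimes U\right)\Delta$, and it cannot be replaced by the naive tensor-trick lift of the vector-space homotopy~$G$, because that lift only inverts $\del_1$ and is not a homotopy for the full codifferential $\del=\del_1+\del_2$ (indeed the paper's $U$ has $U_n^1\neq 0$ for $n\geqslant 2$). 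Your expansion therefore has to be redone with~\eqref{DeltaUdef}; it still closes, using $pU=Ui=UU=0$ and $p(\id+F\bar F)i=2\,\id$, but carrying this out is precisely the content of the induction in appendix~\ref{HPLforcoalgebras} and is the part you have not verified.

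Relatedly, the role you assign to the symmetrisation over $S_n$ is not how the cancellation works, and it sits exactly in what you defer as ``the hard part''. With the correct coproduct rule for~$U$ there is no reshuffling of coderivations between the two tensor factors that needs to be resummed over~$\sigma$: every cross term contains one of $pU$, $Ui$, $UU$, or lands in $J=I_L+I_R$ and is annihilated by the sandwich $p\,(\,\cdot\,)\,i$, so each \emph{individual} word $S_n^1(a_{\sigma(1)}\otimes\cdots\otimes a_{\sigma(n)})=p\,a_{\sigma(1)}U\cdots U a_{\sigma(n)}i$ is already a coderivation. This stronger statement is what the paper proves, and $M_n^1$ is then a coderivation simply because a sum of coderivations is one. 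Your identification of the requirements on $i$ and $p$ (coalgebra morphisms, true for $F$ and $\bar F$) and of the standard form as load-bearing is correct; what is missing is the correct compatibility of the homotopy with~$\Delta$ and the actual verification of the expansion, which together constitute essentially the whole proof.
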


\begin{proof}
Let us introduce some convenient notation: We write $\mathcal{A}=\mathrm{End}({T_{A_{1}}})$, and for an element $f \in \mathrm{End}({T_{A_{1}}})$, let $L_f$ and $R_f$ denote the left and right multiplication by $f$, respectively. Define the left and right ideals $I_L = \mathrm{Ker}(R_{i})\cap\mathrm{Ker}(R_{h})$ and $I_R = \mathrm{Ker}(L_{p})\cap\mathrm{Ker}(L_{h})$. Since~$h$ is in standard form, we have $h \in I = I_L \cap I_R$. Finally we define $\pi=ip$, $J = I_L + I_R$ and $\mathcal{B} = \mathrm{lin}_{\C}(\mathrm{id}_{T_{A_1}} - \pi)$.

Now we will show that in fact $S_n^1(a_1 \otimes \ldots \otimes a_n)$ is a coderivation. Denote $\Lambda_n = a_1 h a_2 \ldots h a_n$ and assume without loss of generality that $a_1,\ldots, a_n$ are homogeneous. The crucial observation is that $\Delta \Lambda_n$ admits the decomposition (proved in appendix~\ref{HPLforcoalgebras})
\begin{equation}
\Delta \Lambda_n = ((\mathrm{id}_{{T_{A_{1}}}} + \mathcal{B})\otimes \Lambda_n + \Lambda_n \otimes (\mathrm{id}_{{T_{A_{1}}}} + \mathcal{B})+ J \otimes \mathcal{A} + \mathcal{A} \otimes J) \Delta \, ,
\label{decomposition}
\end{equation}
where we use a short-hand notation where e.\,g.~``$J\otimes \mathcal A$'' means ``some element in $J\otimes \mathcal A$''. 
It then follows that $S_n^1(a_1\otimes \ldots \otimes a_n) = p \Lambda_n i$ satisfies
$$
\Delta p \Lambda_n i = (p \Lambda_n i\otimes \mathrm{id}_{{T_{A_{2}}}} + \mathrm{id}_{{T_{A_{2}}}} \otimes p \Lambda_n i)\Delta \, ,
$$
which says that $S_n^1(a_1\otimes \ldots \otimes a_n) $ is a coderivation. 
\end{proof}

We have thus proved that~$M$ continues to be an $L_{\infty}$-morphism when restricted to coderivations. However, it will then generically no longer be a quasi-isomorphism (as we discuss in appendix~\ref{appontooff}).

\subsubsection{On-shell bulk-induced deformations}

Proposition \ref{HPLCoder} enables us, given a deformation retraction of $A_{\infty}$-algebras (\ref{HDRdiagram-Ainf}), to transport deformations of $\partial_1$ to deformations of $\partial_2$. To accomplish our aim to construct bulk-deformed open topological string theories for Landau-Ginzburg models, we are now left to specify the deformation retract data
\begin{equation}
\label{HDRdiagram-ourXX}
\xymatrix{%
({T_{H}}, \widetilde{\partial}) \ar@<+.5ex>[rr]^-{F}  && ({T_{A}}, \partial) \ar@<+.5ex>[ll]^-{\bar F} \\
}%
\!\!\!\xymatrix{%
{}\ar@(ur,dr)[]^{U}
}%
\end{equation}
paying attention to the condition that the homotopy~$U$ be in standard form. In writing~\eqref{HDRdiagram-ourXX} we have already revealed that in the case at hand the inclusion map is given by the minimal model morphism $F:(H,\widetilde\del)\rightarrow (A,\del)$ of proposition~\ref{minmodthm}. It remains to find its homotopy inverse~$\bar F$ and the homotopy~$U$ itself. This is achieved by the next proposition which constructs~$\bar F$ and~$U$ explicitly. 

\begin{proposition}
\label{HDR}
For any $A_{\infty}$-algebra $(A,\del)$, there exist a unique colagebra morphism~$\bar F$ and a map~$U$ that make~\eqref{HDRdiagram-ourXX} a deformation retraction and satisfy the conditions
\begin{align}
\Delta U &= \frac{1}{2}(U \otimes (\id_{T_{A}} + F\bar{F}) + (\id_{T_{A}} + F \bar{F}) \otimes U)\Delta \, , \label{DeltaUdef}\\
U_1^1 &= G\, , \nonumber \\
U_n^1 &= -G\partial_2^1U_n^2\textrm{ for }n \geqslant 2\, , \label{Un1} \\
\bar{F}_1^1 &= \pi_H\, , \nonumber \\
\bar{F}_n^1 &= -\pi_H([\partial, U])_n^1 = -\pi_H \partial_2^1U_n^2\, . \label{Ftilde1n}
\end{align}
Moreover, these conditions allow for an explicit construction of $\bar F,U$, and~$U$ is in standard form. 
\end{proposition}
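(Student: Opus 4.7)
The proof proceeds in three stages: an inductive construction of~$U$ and~$\bar{F}$, a verification of the deformation-retract identities, and a check of the standard-form conditions. Throughout I would exploit the fact that the decomposition~\eqref{AHBL} makes the minimal-model homotopy~$G$ itself standard, i.\,e.~$G^2 = G \iota_H = \pi_H G = 0$, with $G = 0$ on $H \oplus L$ and $G|_B : B \rightarrow L$ the inverse of $r_1|_L$. A further useful observation is that by~\eqref{Fn1} the components $F_n^1$ for $n \geqslant 2$ factor through~$G$ and hence take values in $L = \mathrm{Im}(G)$; in particular $G F_n^1 = 0 = \pi_H F_n^1$.

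First I would build~$U$ and~$\bar{F}$ simultaneously by induction on the input arity~$n$. The base case $n = 1$ is prescribed. For the inductive step, assume $U_k^m$ and $\bar{F}_k^1$ are known for all $k < n$. Projecting~\eqref{DeltaUdef} onto each summand $A[1]^{\otimes m_1} \otimes A[1]^{\otimes m_2}$ of the codomain with $m_1, m_2 \geqslant 1$ yields an equation whose left-hand side encodes a fixed splitting of $U_n^m$ and whose right-hand side involves only $U_k^{\bullet}$ for $k < n$ together with the coalgebra morphism $\mathrm{id} + F \bar{F}$ in arities strictly less than~$n$. Consistency of the different splittings of a given $U_n^m$ follows from the coassociativity of~$\Delta$, and they jointly determine $U_n^m$ uniquely for every $m \geqslant 2$. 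Then~\eqref{Un1} fixes $U_n^1$ from $U_n^2$, and~\eqref{Ftilde1n} fixes $\bar{F}_n^1$. The coalgebra morphism~$\bar{F}$ is recovered from $\{\bar{F}_k^1\}$ via $\bar{F}_n^m = \sum_{n_1 + \cdots + n_m = n} \bar{F}_{n_1}^1 \otimes \cdots \otimes \bar{F}_{n_m}^1$.

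Next I would verify the deformation-retract identity $\bar{F} F = \mathrm{id}_{T_{H}}$ and the standard-form equalities $U^2 = U F = \bar{F} U = 0$. Since $\bar{F}$ and~$F$ are coalgebra morphisms, the first identity reduces to $(\bar{F} F)_n^1 = \delta_{n,1}\, \mathrm{id}_H$: at $n = 1$ this is $\pi_H \iota_H = \mathrm{id}_H$, and for $n \geqslant 2$ substituting~\eqref{Fn1} and~\eqref{Ftilde1n} and applying $\pi_H G = 0$ together with the minimal-model equations of proposition~\ref{minmodthm} produces a telescoping cancellation. For the standard-form conditions, \eqref{Un1} shows that $U_n^1$ factors through~$G$ and hence takes values in~$L$, on which both $\pi_H$ and~$G$ vanish; an inductive propagation through~\eqref{DeltaUdef} extends this to each factor of $U_n^m$ and yields $\bar{F} U = 0$ as well as $U^2 = 0$. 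The identity $U F = 0$ follows analogously from $G F_n^1 = 0$ combined with the tensor structure of~$U$ produced by~\eqref{DeltaUdef}.

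The principal obstacle lies in verifying the homotopy relation $\mathrm{id}_{T_{A}} - F \bar{F} = [\partial, U]$ in every tensor bidegree. At arity $(n, 1)$ this reduces, via~\eqref{Un1} and the defining relation $r_1 G + G r_1 = \mathrm{id} - \pi_H$, to a rearrangement of the $A_{\infty}$-constraints~\eqref{Ainfconst} together with the equations of proposition~\ref{minmodthm}. For higher output arity $m \geqslant 2$ the failure of~$U$ to be a strict coderivation is compensated precisely by the $F \bar{F}$-correction appearing in~\eqref{DeltaUdef}; making this cancellation explicit amounts to applying~$\Delta$ to both sides of the homotopy relation and combining the codifferential identity $\partial^2 = 0$ with~\eqref{DeltaUdef}, with careful Koszul-sign bookkeeping. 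I expect this verification to occupy the bulk of the technical content of the proof.
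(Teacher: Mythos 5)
Your proposal follows essentially the same route as the paper's proof in appendix~\ref{UFtildeproof}: recursive determination of $U_n^1$ and $\bar F_n^1$ from~\eqref{DeltaUdef} and~\eqref{Un1}, verification of the output-arity-one part of the homotopy relation $\id_{{T_{A}}}-F\bar F=[\del,U]$ by splitting along $H\oplus B\oplus L$ with an induction for the $\pi_L$-component, reduction of higher output arities to arity one via the coproduct compatibility of~\eqref{DeltaUdef} with the homotopy relation, and the left-inverse and standard-form checks via $\mathrm{Im}(F_n^1)\subset L=\mathrm{Ker}(G)\cap\mathrm{Ker}(\pi_H)$ together with the $G$-factorisation of $U_n^1$. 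The argument is correct and matches the paper's in all essential steps.
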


The rather technical proof of the above proposition can be found in appendix~\ref{UFtildeproof}. 

\medskip

Now we have arrived at the point to put together all the results obtained in this section. We apply the $L_{\infty}$-morphism~$M$ of proposition~\ref{HPL} in conjunction with the deformation retraction of proposition~\ref{HDR}. Recall that in the previous subsection we found that for Landau-Ginzburg models off-shell deformations $\delta\in\text{Coder}({T_{A}})$ are precisely the bulk-induced coderivations determined by
\be\label{deltaphi}
\delta_{0}^1 = \sum_{i} t_{i}\phi_{i}
\ee
where $\{\phi_{i}\}$ is a basis of the bulk space $\Jac(W)$, and~$t_{i}$ are the associated moduli. By proposition~\ref{prop:MCsolutions} we can use~$M$ to map~$\delta$ to deformations~$\widetilde\delta$ of the on-shell open string algebra $(H,\widetilde\del)$, and proposition~\ref{HPLCoder} ensures that $\widetilde\del+\widetilde\delta$ indeed encodes an $A_{\infty}$-structure. Thus our final result is the following.

\begin{theorem}\label{thm:bulkdeformations}
The bulk-induced deformations of the on-shell Landau-Ginzburg open string algebra $(H,\widetilde\del)$ are given by
\be\label{deltatilde}
\widetilde{\delta} = \sum_{n \geqslant 1}\frac{1}{n!}M_n^1(\delta^{\wedge n}) = \sum_{n \geqslant 1}\bar{F} (\delta U)^n \delta F = \bar F(\id_{{T_{A}}} - \delta U)^{-1} \delta F \, .
\ee
By substituting~\eqref{deltaphi} together with the concrete formulas for $F,\bar F$ and~$U$ into~\eqref{deltatilde}, one obtains explicit expressions for bulk-deformed $A_{\infty}$-products on~$H$.
\end{theorem}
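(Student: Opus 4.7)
The plan is to assemble the results built up in this section: Theorem~\ref{Th1} produces the off-shell bulk-induced deformation $\delta \in \mathrm{Coder}({T_A})$ as the image of a solution $\gamma \in \Jac(W)$ of the pure bulk Maurer-Cartan equation; proposition~\ref{HDR} produces deformation retract data $(F,\bar F,U)$ connecting $({T_H},\widetilde\del)$ and $({T_A},\del)$ with $U$ in standard form; proposition~\ref{HPL} then turns this data into an $L_\infty$-quasi-isomorphism $M: (\mathrm{End}({T_A}),[\del,\cdot],[\cdot,\cdot]) \to (\mathrm{End}({T_H}),[\widetilde\del,\cdot],[\cdot,\cdot])$; and proposition~\ref{HPLCoder} guarantees that $M$ restricts to an $L_\infty$-morphism on the coderivation sub-DGLA. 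Together these give a natural candidate for the on-shell deformation, and the theorem only requires verifying that the candidate both solves the correct Maurer-Cartan equation and admits the stated closed-form expression.

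The first equality is immediate from proposition~\ref{prop:MCsolutions} applied to the $L_\infty$-morphism $M|_{\mathrm{Coder}}$: the canonical transport formula $\widetilde\delta = \sum_{n \geqslant 1} \frac{1}{n!} M_n^1(\delta^{\wedge n})$ produces a solution of the Maurer-Cartan equation for $(\mathrm{Coder}({T_H}),[\widetilde\del,\cdot],[\cdot,\cdot])$, and by proposition~\ref{HPLCoder} this $\widetilde\delta$ actually lies in $\mathrm{Coder}({T_H})$. Hence $\widetilde\del + \widetilde\delta$ is a genuine curved $A_\infty$-structure on $H$, which is the desired on-shell bulk-induced deformation.

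For the second equality one unwinds the explicit formula for $M_n^1$ from proposition~\ref{HPL} in the special case where all $n$ arguments coincide. Since $\delta$ has cohomological degree $+1$ in $\mathrm{Coder}({T_A})$, it carries degree $0$ in the shifted space $\mathrm{End}({T_A})[1]$ where the wedge product lives; consequently every Koszul sign $\varepsilon_{\sigma;\delta,\dots,\delta}$ equals $+1$, and the $n!$ summands in the symmetrization all coincide. Substituting the retraction data $p=\bar F$, $i=F$, $h=U$ from proposition~\ref{HDR} thus yields
\begin{equation*}
\tfrac{1}{n!}\, M_n^1(\delta^{\wedge n}) \;=\; \bar F\, \delta\, U\, \delta\, U \cdots U\, \delta\, F \;=\; \bar F\, (\delta U)^{n-1} \delta\, F
\end{equation*}
with $n$ copies of $\delta$ and $n-1$ copies of $U$. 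Summing over $n\geqslant 1$ (equivalently, over $k=n-1\geqslant 0$) gives the geometric series $\bar F(\id_{{T_A}}-\delta U)^{-1}\delta F$. This expression is well-defined in the category of formal power series in the deformation parameters $t_i$ because $\delta$ has no constant term (it is linear in $\gamma^{(1)} = \sum_i t_i \phi_i$ at leading order and higher-order in $t$ thereafter), so each finite order in $t$ picks up only finitely many terms of the series. The concluding assertion that explicit $A_\infty$-products on $H$ are obtained is then just the substitution of the recursive formulas for $F$ from proposition~\ref{minmodthm} and $\bar F, U$ from proposition~\ref{HDR}.

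The genuinely hard work has already been done in the preceding propositions: the coderivation-preservation statement of proposition~\ref{HPLCoder}, which rests on the non-trivial coproduct decomposition~\eqref{decomposition}, is what elevates the homological perturbation lemma from a mere chain-map transport to a transport of $A_\infty$-structures; and the explicit construction of $(\bar F,U)$ in standard form in proposition~\ref{HDR} is what makes the formula computable. Relative to those results, the present theorem is essentially an assembly, and the only delicate point to check carefully is the Koszul-sign collapse in passing from wedge powers to tensor powers, which hinges on the shifted degree of $\delta$ being even.
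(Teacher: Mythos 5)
Your proof is correct and takes essentially the same route as the paper, which likewise presents theorem~\ref{thm:bulkdeformations} as an assembly of theorem~\ref{Th1} (identifying $\delta$ via~\eqref{deltaphi}), propositions~\ref{HDR}, \ref{HPL} and~\ref{HPLCoder}, and proposition~\ref{prop:MCsolutions}, followed by the evaluation of $M_n^1$ on $\delta^{\wedge n}$. Your Koszul-sign check (using that $\delta$ has even shifted degree) and the resulting term $\bar F(\delta U)^{n-1}\delta F$ are exactly what reproduce the closed form $\bar F(\id_{{T_{A}}}-\delta U)^{-1}\delta F$; the intermediate sum $\sum_{n\geqslant 1}\bar F(\delta U)^{n}\delta F$ as printed in~\eqref{deltatilde} has an off-by-one in the exponent, which your derivation implicitly corrects.
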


To make sure that the bulk moduli dependent $A_{\infty}$-structure encoded in $\widetilde\del+\widetilde\delta$ immediately describes all amplitudes of bulk-deformed open topological string theory, it has to be shown that also the deformed $A_{\infty}$-products are cyclic with respect to the Kapustin-Li pairing. 

Let us for the moment restrict to those bulk fields that ``are seen by the open TFT of the brane~$D$'', i.\,e.~those $\phi\in\Jac(W)$ that are not mapped to zero by the bulk-boundary map. We denote an off-shell deformation that arises from such a bulk field by $\delta_{Z}$. Then the on-shell deformation $\widetilde\delta_{Z}$ takes a particularly simple form in our approach: 
$$
\widetilde\delta_{Z} =  \bar F(\id_{{T_{A}}} - \delta_{Z} U)^{-1} \delta_{Z} F = \bar F \delta_{Z} F= \delta_{Z} \, ,
$$
where we have used the fact that the image of~$F$ only consists of tensor powers of elements in~$H$ and the complement of $\text{Ker}([D,\,\cdot\,])$ (see~\eqref{AHBL}), and that~$U$ acts as~$G$ on one tensor factor of each summand. Furthermore, since $\delta_{Z}$ is a pure coordinate coderivation, it follows that the Lie derivative of the flat part of the symplectic form~$\omega$ mentioned in subsections~\ref{oTSTforLG} and~\ref{omegaconstruction} with respect to $\delta_{Z}^\vee$ vanishes. As explained in detail in~\cite{c0904.0862}, this means that the $A_{\infty}$-structure $\widetilde\del+\widetilde\delta_{Z}$ is indeed cyclic. 

The rigidity of the methods used in transporting bulk deformations on-shell may suggest the cyclicity of $\widetilde\del+\widetilde\delta$ also in the general case when the off-shell bulk deformation is not of the form $\delta_{Z}$. However, then the abstract manipulation of $\widetilde\delta$ in~\eqref{deltatilde} is more difficult, and at this point we have no proof that $\widetilde\del+\widetilde\delta$ is cyclic. 
But even if this were not the case, if there are no obstructions in principle one can do on-shell tadpole-cancellation as discussed in~\cite{hll0402}, i.\,e.~construct a weak $A_\infty$-isomorphism from $(H, \widetilde\del+\widetilde\delta)$ to a non-curved $A_\infty$-algebra. As opposed to the case of off-shell tadpole-cancellation (see the proof of lemma~\ref{Tmaplemma}) such a map may not exist, and its conceptual understanding is incomplete; it exists precisely iff there are no obstructions. Then the approach of~\cite{c0904.0862} is applicable and one can obtain the correct bulk-deformed, tadpole-cancelled and cyclic $A_{\infty}$-structure. This is in particular true for the special brane which is the compact generator of the category all matrix fatorisations, and hence by a standard argument the Calabi-Yau $A_\infty$-structure can be lifted to the full D-brane category in this case (see~\cite{d0904.4713}).

\section{Conclusion}

In the present paper we have obtained a first-principle derivation of the bulk-induced deformations of all open topological string amplitudes arising from affine B-twisted Landau-Ginzburg models. Our approach is a conceptual step forward in that it provides a \textsl{constructive} method to compute amplitudes, instead of trying to solve their defining $A_{\infty}$-constraints only by brute force. 

Our results will also be of practical use in the following sense. Together with the methods of~\cite{c0904.0862} or the direct construction of the off-shell Kapustin-Li pairing as sketched in subsection~\ref{omegaconstruction}, our results enable us to perturbatively compute the effective D-brane superpotential $\mathcal W_{\text{eff}}$ at tree level prior to tadpole-cancellation, to all orders in both boundary and bulk moduli, and for all branes in all Landau-Ginzburg models. 

The main line of thought that permeates this work consists in making rigorous the link between the DG Lie algebra of bulk fields and the DG Lie algebra governing deformations of open string amplitudes, i.\,e.~of the underlying Calabi-Yau $A_{\infty}$-structure. Our solution to the problem is the explicit construction of an $L_{\infty}$-morphism that transports pure bulk deformations to the boundary sector. This map is the composition of an $L_{\infty}$-quasi-isomorphism from the off-shell bulk sector to the off-shell boundary sector, and of an $L_{\infty}$-morphism transporting off-shell boundary deformations on-shell. We saw that the former is a ``weak'' version of deformation quantisation, while the latter can be viewed as homological perturbation in its $L_{\infty}$-incarnation. 

\medskip

There are many interesting directions of research that are opened by our construction, and we shall name a few. One is rather immediate: in string theory it is orbifolded Landau-Ginzburg models that are particularly relevant, and one should extend our methods for bulk deformations to orbifolds. A related question is that of the actual perturbative computation of $\mathcal W_{\text{eff}}$ in concrete examples. In the present paper it was our aim to address conceptual questions and give a completely general prescription to compute all bulk-deformed amplitudes in arbitrary affine Landau-Ginzburg models. But since our construction in theorem~\ref{thm:bulkdeformations} is entirely explicit, it will be straightforward to implement it on a computer. We leave such an implementation and the computation of examples for future work. As mentioned before, the efficiency of the algorithm will be much improved by the explicit formula for the off-shell Kapustin-Li pairing as sketched in subsection~\ref{omegaconstruction}. 

A more conceptual matter is the following. While our map~\eqref{L1} transporting off-shell bulk deformations to off-shell boundary deformations is an $L_{\infty}$-quasi-isomorphism, the map~\eqref{L2} to on-shell boundary deformations is not necessarily a quasi-isomorphism. As a consequence there might be on-shell boundary deformations that are not bulk-induced. This may be related to the possible non-uniqueness of boundary sectors for a given bulk sector (in the case of non-rational theories), as well as the different roles played by Hochschild cohomology of the first and second kind. 

Finally, we mention a question that goes beyond affine Landau-Ginzburg models. Our results appear to be the first coherent $L_{\infty}$-formulation of and solution to the problem of bulk-induced deformations in such models. It is natural to ask to what extent and how they can be generalised to non-affine Landau-Ginzburg models and even more general open topological string theories. Recall that our bulk-to-boundary transport splits into the two maps~\eqref{L1} and~\eqref{L2}, the second of which is a model-independent construction. Hence the interesting question in going beyond affine Landau-Ginzburg models is how to generalise the map~\eqref{L1}. We expect that part of the general answer is that the left-hand side of~\eqref{L1} will continue to be the \textsl{off-shell} bulk space which always has a DG Lie algebra structure, just like the right-hand side of~\eqref{L1}. Any such generalisation would at the same time also give rise to a new variant of deformation quantisation.

\subsubsection*{Acknowledgements}

We thank 
    Patrick B\"ohl, 
    Ilka Brunner, 
    Manfred Herbst, 
    Calin Lazaroiu, 
    Wolfgang Lerche, 
    Daniel Murfet
    and Masoud Soroush 
for discussions.

\appendix

\section{Details of some proofs}

\subsection{A spectral sequence}\label{appss}

Here we present the spectral sequence computation used in the proof of proposition~\ref{wDQ}. First we note that $(\mathrm{Coder}({T_{R}}), [\widehat{\partial}_0 + \widehat{\partial}_2, \,\cdot\,])$ is a mixed complex: $[\partial_0, \,\cdot\,]$ decreases tensor degree by $1$ and increases tilde degree by~1, while $[\partial_2, \,\cdot\,]$ increases tensor degree by $1$ and increases tilde degree by 1. In order to construct a bicomplex, we organise tensor and tilde degrees as follows
$$
\xymatrix{%
 & \vdots  & \vdots &\vdots  & \vdots  &  \\
\cdots \ar[r]^-{-d_0} & C_{0+s}^3 \ar[u]_{d_2} \ar[r]^-{-d_0} & C_{1+s}^2 \ar[u]_{d_2} \ar[r]^-{-d_0} & C_{0+s}^1 \ar[u]_{d_2} \ar[r]^-{-d_0} & C_{1+s}^0 \ar[u]_{d_2} \ar[r]^-{-d_0} & \ldots\\
\cdots \ar[r]^-{d_0} &C_{1+s}^2 \ar[u]_{d_2} \ar[r]^-{d_0} & C_{0+s}^1 \ar[u]_{d_2} \ar[r]^-{d_0} & C_{1+s}^0 \ar[u]_{d_2} \ar[r]^-{d_0} & 0 \ar[u]_{d_2} \ar[r]^-{d_0} & \ldots\\
\cdots \ar[r]^-{-d_0} &C_{0+s}^1 \ar[u]_{d_2} \ar[r]^-{-d_0} & C_{1+s}^0 \ar[u]_{d_2} \ar[r]^-{-d_0} & 0 \ar[u]_{d_2} \ar[r]^-{-d_0} & 0 \ar[u]_{d_2} \ar[r]^-{-d_0} & \ldots\\
\cdots \ar[r]^-{d_0} &C_{1+s}^0 \ar[u]_{d_2} \ar[r]^-{d_0} & 0 \ar[u]_{d_2} \ar[r]^-{d_0} & 0 \ar[u]_{d_2} \ar[r]^-{d_0} & 0 \ar[u]_{d_2} \ar[r]^-{d_0} & \ldots
\\}%
$$
where $s \in \{0,1\}$ and $C_m^n$ denotes the subspace of $\mathrm{Coder}({T_{R}})$ of tilde degree $m$ and tensor degree $n$, and we write $d_0 = [\widehat{\partial}_0, \,\cdot\,]$ and $d_2 = [\widehat{\partial}_2, \,\cdot\,]$. 

We choose the first page of the spectral sequence computing $[\widehat\del_{0}+\widehat\del_{2},\,\cdot\,]$-cohomology of $\Coder({T_{R}})$ to be the cohomology of $d_2$, which is given by replacing $C_{\bullet}^{\bullet}$ above with the image of $K_1^1$ of appropriate degrees. Since $K_1^1 l_1^1 = c_1^1 K_1^1$, for $s = 1$ the second page vanishes, while for $s=0$ it is
$$
\xymatrix{%
 & \vdots  & \vdots &\vdots  & \vdots  &  \\
\cdots \ar[r]^-{-d_0} & 0 \ar[u]_{d_2} \ar[r]^-{-d_0} & 0 \ar[u]_{d_2} \ar[r]^-{-d_0} & 0 \ar[u]_{d_2} \ar[r]^-{-d_0} & \Jac(W) \ar[u]_{d_2} \ar[r]^-{-d_0} & \ldots\\
\cdots \ar[r]^-{d_0} & 0 \ar[u]_{d_2} \ar[r]^-{d_0} & 0 \ar[u]_{d_2} \ar[r]^-{d_0} & \Jac(W) \ar[u]_{d_2} \ar[r]^-{d_0} & 0 \ar[u]_{d_2} \ar[r]^-{d_0} & \ldots\\
\cdots \ar[r]^-{-d_0} & 0 \ar[u]_{d_2} \ar[r]^-{-d_0} & \Jac(W) \ar[u]_{d_2} \ar[r]^-{-d_0} & 0 \ar[u]_{d_2} \ar[r]^-{-d_0} & 0 \ar[u]_{d_2} \ar[r]^-{-d_0} & \ldots\\
\cdots \ar[r]^-{d_0} & \Jac(W) \ar[u]_{d_2} \ar[r]^-{d_0} & 0 \ar[u]_{d_2} \ar[r]^-{d_0} & 0 \ar[u]_{d_2} \ar[r]^-{d_0} & 0 \ar[u]_{d_2} \ar[r]^-{d_0} & \ldots
\\}%
$$
Here for degree reasons the spectral sequence degenerates, yielding the desired result. 

We mention that instead of computing the Hochschild cohomology of $(R,\widehat{\partial}_0 + \widehat{\partial}_2)$ we could also compute that of $(A,\del_{1}+\del_{2})$ which by remark~\ref{HHremark} and the existence of the weak isomorphism~$T$ in the proof of lemma~\ref{Tmaplemma} is isomorphic to $\HH^\bullet(R,\widehat{\partial}_0 + \widehat{\partial}_2)$. The analogous spectral sequence in the case of $(A,\del_{1}+\del_{2})$ is more involved and degenerates only at the third page.

\subsection{Homological perturbation for coalgebras}\label{HPLforcoalgebras}

We continue the proof of proposition~\ref{HPLCoder}. We will establish~\eqref{decomposition} by induction. For this it is convenient to consider a sequence $\{a_i\}_{i \in \N} \subset \mathrm{Coder}({T_{A_{1}}})$. Then we have
\begin{align*}
\Delta\Lambda_{n+1} & = \Delta \Lambda_n h a_{n+1}\\
&= ((\mathrm{id}_{T_{A_1}} + \mathcal{B}) \otimes \Lambda_n + \Lambda_n \otimes (\mathrm{id}_{T_{A_1}} + \mathcal{B}) + J \otimes \mathcal{A} + \mathcal{A} \otimes J)  \\
&\quad \cdot \frac{1}{2}\Big(h\otimes(\mathrm{id}_{T_{A_1}} + \pi) + (\mathrm{id}_{T_{A_1}}+\pi) \otimes h \Big) (a_{n+1} \otimes \mathrm{id}_{T_{A_1}} + \mathrm{id}_{T_{A_1}} \otimes a_{n+1})\Delta \, .
\end{align*}
The computation naturally splits into two steps, one involving the summand $(\mathrm{id}_{T_{A_1}} + \mathcal{B})\otimes \Lambda_n + \Lambda_n \otimes (\mathrm{id}_{T_{A_1}} + \mathcal{B})$ and the other involving the term $J \otimes \mathcal{A} + \mathcal{A} \otimes J$ in the first factor on the right-hand side above. For the first piece we have
\begin{align*}
&((\mathrm{id}_{T_{A_1}} + \mathcal{B})\otimes \Lambda_n + \Lambda_n \otimes (\mathrm{id}_{T_{A_1}} + \mathcal{B}))\\
&\quad\cdot \Big(h\otimes\frac{1}{2}(\mathrm{id}_{T_{A_1}} + \pi) + \frac{1}{2}(\mathrm{id}_{T_{A_1}}+\pi) \otimes h)(a_{n+1} \otimes \mathrm{id}_{T_{A_1}} + \mathrm{id}_{T_{A_1}} \otimes a_{n+1}\Big)\\
&= \Big((\mathrm{id}_{T_{A_1}} + \mathcal{B}) \otimes \Lambda_n + \Lambda_n \otimes (\mathrm{id}_{T_{A_1}} + \mathcal{B})\Big) \Big(ha_{n+1}\otimes \frac{1}{2}(\mathrm{id}_{T_{A_1}} + \pi) \\
&\quad + h \otimes \frac{1}{2}(\mathrm{id}_{T_{A_1}} + \pi)a_{n+1} + (-1)^{\widetilde{a}_{n+1}}\frac{1}{2}(\mathrm{id}_{T_{A_1}}+\pi)a_{n+1}\otimes h \\
& \quad + \frac{1}{2}(\mathrm{id}_{T_{A_1}}+\pi)\otimes ha_{n+1}\Big)\\
&= I_R \otimes \mathcal{A} + I \otimes \mathcal{A} + \mathcal{A}\otimes I_L + (\mathrm{id}_{T_{A_1}} + \mathcal{B})\otimes \Lambda_n h a_{n+1}\\
&\quad+ \Lambda_n h a_{n+1} \otimes (\mathrm{id}_{T_{A_1}} + \mathcal{B}) + I_L \otimes \mathcal{A} + \mathcal{A} \otimes I + \mathcal{A}\otimes I_R \, ,
\end{align*}
where in the last step we have used that $L_h \mathcal{B} \subset I_L$. This is true because $h (\mathrm{id}_{T_{A_1}} -\pi) = h \partial h$. We have thus proved that the first piece in the computation is of the desired form. For the second piece we have
\begin{align*}
&(J \otimes \mathcal{A} + \mathcal{A} \otimes J) \Big(ha_{n+1}\otimes \frac{1}{2}(\mathrm{id}_{T_{A_1}} + \pi) + h \otimes \frac{1}{2}(\mathrm{id}_{T_{A_1}} + \pi)a_{n+1}\\
&\quad + (-1)^{\widetilde{a}_{n+1}}\frac{1}{2}(\mathrm{id}_{T_{A_1}}+\pi)a_{n+1}\otimes h + \frac{1}{2}(\mathrm{id}_{T_{A_1}}+\pi)\otimes ha_{n+1}\Big)\\
&= I_R \otimes \mathcal{A} + I \otimes \mathcal{A} + \mathcal{A}\otimes I_L + J \otimes \mathcal{A}+ \mathcal{A}\otimes J + I_L \otimes \mathcal{A} + \mathcal{A} \otimes I + \mathcal{A} \otimes I_R\end{align*}
which is again of the desired form.

\subsection[More on the $L_{\infty}$-morphism $M$]{More on the $\boldsymbol{L_{\infty}}$-morphism $\boldsymbol{M}$}\label{appontooff}

This appendix supplements subsection~\ref{DRforcoalgebras}. We give a brief explanation of why $M_1^1: \mathrm{Coder}(T_{A_1}) \rightarrow \mathrm{Coder}(T_{A_2})$ is generically not a quasi-isomorphism. For simplicity we denote $C_{A_i} = \mathrm{Coder}(T_{A_i})$ for $i \in\{1,2\}$ and define $g = M_1^1 = p(\,\cdot\,)i$. Consider the short exact sequence 
\begin{equation}
\xymatrix{
0 \ar@<+.0ex>[r] & \mathrm{Ker}(g) \ar@<+.5ex>[r]^-{f} & C_{A_1} \ar@<+.5ex>[l]^-{r} \ar@<+.5ex>[r]^{g} & C_{A_2} \ar@<+.5ex>[l]^-{s} \ar@<+.0ex>[r]  & 0
}\label{shortexseq}
\end{equation}
where $f$ denotes inclusion. As a sequence of modules, (\ref{shortexseq}) is split exact with left and right inverses $r$ and $s$ respectively given by $r(\phi)_n^1 = (\phi - \pi\phi\pi)_n^1$ and $s(\psi)_n^1 = (i \psi p)_n^1$ for $\phi \in C_{A_1}$ and $\psi \in C_{A_2}$. If we view $\mathrm{Ker}(g)$, $C_{A_1}$ and $C_{A_2}$ as complexes with the appropriate Hochschild differentials, $f$and $g$ are promoted to maps of complexes and one can study the cohomology of $C_{A_2}$, i.\,e.~the Hochschild cohomology of $(A_2, \partial_2)$, by analysing the long exact sequence of cohomology groups
\begin{equation}
\xymatrix{
\ldots \ar@<+.0ex>[r] & H(\mathrm{Ker}(g)) \ar@<+.0ex>[r] & H(C_{A_1}) \ar@<+.0ex>[r] & H(C_{A_2})\ar@<+.0ex>[r]^-{\delta} & H(\mathrm{Ker}(g))\ar@<+.0ex>[r] & \ldots
}\label{longexseq}
\end{equation}
where the coboundary map $\delta$ is given by $\delta(\psi)= r([\partial_1, s(\psi)])$ for $\psi \in C_{A_2}$.

If $r$, $s$ were maps of complexes, \eqref{shortexseq} would be promoted to a split exact sequence of complexes, and in that case \eqref{longexseq} would reduce to a short exact sequence. It is readily seen however that $r$ and $s$ are not necessarily maps of complexes, and the following modification must be made. In general we can truncate \eqref{longexseq} as
\begin{equation}
\xymatrix{
0 \ar@<+.0ex>[r] & H(\mathrm{Ker}(g))/\mathrm{Im}(\delta) \ar@<+.0ex>[r] & H(C_{A_1}) \ar@<+.0ex>[r] & H(C_{A_2})\ar@<+.0ex>[r] & \mathrm{Im}(\delta) \ar@<+.0ex>[r] & 0 \, .
}\label{trunc1exseq}
\end{equation}
By the split property of \eqref{shortexseq} we observe that
$$
\mathrm{Im}(\delta) = (\mathrm{Im}([\partial_1, \,\cdot\,]) \cap \mathrm{Ker}(g))/\mathrm{Im}([\partial_1,\,\cdot\,])|_{\mathrm{Ker}(g)} \, ,
$$
and hence \eqref{trunc1exseq} becomes
$$
\xymatrix{
0 \ar@<+.0ex>[r] & H(C_{A_1}) \cap \mathrm{Ker}(g) \ar@<+.0ex>[r] & H(C_{A_1}) \ar@<+.0ex>[r] & H(C_{A_2}) \ar@<+.0ex>[r] & \mathrm{Im}(\delta) \ar@<+.0ex>[r] & 0
}
$$
Generically this does not simplify, in contrast to the case of the complexes of endomorphisms $\mathrm{End}(T_{A_i})$. Here the corresponding inverse maps $r$ and $s$ are maps of complexes and the long exact sequence reduces to
$$
\xymatrix{
0 \ar@<+.0ex>[r] & H(\mathrm{End}(T_{A_1})) \cap \mathrm{Ker}(g) \ar@<+.0ex>[r] & H(\mathrm{End}(T_{A_1})) \ar@<+.0ex>[r] & H(\mathrm{End}(T_{A_2})) \ar@<+.0ex>[r] & 0 \, .
}
$$
The computation in (\ref{cohomologyEnd}) then shows that $H(\mathrm{End}(T_{A_1}))\cap \mathrm{Ker}(g) = 0$ and we recover $H(\mathrm{End}(T_{A_1})) \cong  H(\mathrm{End}(T_{A_2}))$, i.\,e.~$M_1^1 : \mathrm{End}(T_{A_1}) \rightarrow  \mathrm{End}(T_{A_2})$ is a quasi-isomorphism.

\subsection{The off-shell to on-shell deformation retraction}\label{UFtildeproof}

Here we will prove proposition~\ref{HDR}. That $\bar{F}$ must be of the form~\eqref{Ftilde1n} follows by applying $\pi_H$ to
\begin{equation}
\mathrm{id}_{{T_{A}}} - F\bar{F} = [\partial, U] \, . 
\label{homotopyour}
\end{equation}
$\bar{F}$ is then automatically an $A_{\infty}$-morphism by
$$
F(\widetilde{\partial}\bar{F} - \bar{F}\partial) = [\partial, F\bar{F}]= -[\partial, \partial U + U \partial] = 0
$$
and the injectivity of $F$. Next we show that~\eqref{DeltaUdef} is compatible with (\ref{homotopyour}):
\begin{align*}
&\Delta (\id_{{T_{A}}} - F\bar{F}) = \Delta (\partial U + U \partial)\\
&= \frac{1}{2}(\partial \otimes \id_{{T_{A}}} + \id_{{T_{A}}} \otimes \partial)(U \otimes (\id_{{T_{A}}} + F\bar{F}) + (\id_{{T_{A}}} + F \bar{F}) \otimes U) \\
& \quad + \frac{1}{2}(U \otimes (\id_{{T_{A}}} + F\bar{F}) + (\id_{{T_{A}}} + F \bar{F}) \otimes U)(\partial \otimes \id_{{T_{A}}} + \id_{{T_{A}}} \otimes \partial)\\
&= \frac{1}{2}(\id_{{T_{A}}} - F\bar{F})\otimes (\id_{{T_{A}}} + F\bar{F}) + \frac{1}{2}(\id_{{T_{A}}} + F\bar{F})\otimes(\id_{{T_{A}}} - F\bar{F})\\ 
&\quad + [\partial, (\id_{{T_{A}}} + F\bar{F})] \otimes U - U \otimes [\partial , (\id_{{T_{A}}} + F\bar{F})]\\
&= (\id_{{T_{A}}} - F\bar{F}\otimes F\bar{F})\Delta = \Delta (\id_{{T_{A}}} - F\bar{F}) 
\end{align*}
where in the penultimate step we made use of the fact that $[\partial, F\bar{F}] = 0$ and in the last step we used the fact that $F\bar{F}$ is a coalgebra morphism. This calculation shows that if we chose $U_n^1$ appropriately, condition~\eqref{DeltaUdef} ensures that $U$ is a solution of~\eqref{homotopyour}. 

Inspection of~\eqref{homotopyour} reveals that $U_1^1$ must be a homotopy for $\partial_1^1$, and we can therefore choose $U_1^1 = G$. For $n\geqslant 2$ we observe that $U_{n}^1$ of the form~\eqref{Un1} satisfies
\begin{equation}
\pi_B (\partial U + U \partial)_n^1 = 0 \, ,
\end{equation}
and moreover
\begin{equation}
\pi_L (\partial U + U \partial)_n^1 = - \pi_L (F \bar{F})_n^1
\label{piLeq}
\end{equation}
holds for $n = 2$. In order to show that this is also true  for $n > 2$ we proceed by induction. We start by substituting~\eqref{Un1} into the left-hand side of \eqref{piLeq} to obtain
$$
-G\partial_2^1(\partial_2^2U_n^2 + U_{n-1}^2 \partial_n^{n-1} + U_n^2 \partial_n^n) = -G\partial_2^1(\partial U + U \partial)_n^2
= G\partial_2^1(F\bar{F})_n^2\\
= -\pi_L(F\bar{F})_n^1
$$
where in the first equality we used the associativity of $\partial$, i.\,e.~$\partial_2^1\partial_3^2 = 0$. The second equality is the induction step that is well-defined due to~\eqref{DeltaUdef}, while in the third equality we used $[\partial, F\bar{F}] = 0$ and the definition of~$F$ from~\ref{minmodthm}. 

We are thus left to verify that $\bar{F}$ is a left inverse of~$F$ and that~$U$ is in standard form. First we give the explicit recursive formulas
\begin{align}
U_n^1 & = -\frac{1}{2}G\partial_2^1 \K{(}{)}{\sum_{l = 1}^{n-1}(U_l^1 \otimes (\id_{{T_{A}}} + F\bar{F})_{n-l}^1 + (\id_{{T_{A}}} + F\bar{F})_{n-l}^1\otimes U_l^1 } \, , \nonumber \\
\bar{F}_n^1 & = -\frac{1}{2}\pi_H\partial_2^1 \K{(}{)}{\sum_{l = 1}^{n-1}(U_l^1 \otimes (\id_{{T_{A}}} + F\bar{F})_{n-l}^1 + (\id_{{T_{A}}} + F\bar{F})_{n-l}^1\otimes U_l^1 } \, . \label{tildeF}
\end{align}
The maps $U_n^m$ for $m>1$ are then completely determined by repeated application of the coproduct $\Delta$.
Now we show that $\bar{F}F = \id_{{T_{A}}}$. Since $\bar{F}F$ is a coalgebra morphism, we only need to consider the subset of equations
\begin{equation}
(\bar{F}F)_n^1 = (\mathrm{id}_{{T_{H}}})_n^1 \, .
\end{equation}
Clearly the above is satisfied for $n=1$. For $n > 1$, $\bar{F}_1^kF_k^n $ vanishes at $k=1$ due to $\mathrm{Im}(F_n^1) \subset L \subset \mathrm{Ker}(\pi_H)$. While for $k > 1$ it vanishes because from (\ref{tildeF}) we see that each summand in $\bar{F}_n^1$ has at least one factor proportional to $G$. However each tensor factor of each summand is in $\mathrm{Im}(F_k^1) \subset H \oplus L = \mathrm{Ker}(G)$.

That $U$ is in standard form, i.\,e.~$\bar{F}U = 0$, $UF = 0$ and $UU = 0$, follows from an argument in direct analogy to the above proof for $\bar{F}$.

The above proof is easily extended to the case where $(A, \partial)$ is an arbitrary $A_{\infty}$-algebra by replacing the formula for $U_n^1$ with $U_n^1 = -G\K{(}{)}{\sum_{k=2}^{n}\partial_k^1U_n^k}$ from which it follows that $\bar{F}_n^1 = -\pi_H\K{(}{)}{\sum_{k=2}^{n}\partial_k^1U_n^k}$.

\end{document}